\documentclass{article}
\usepackage[
margin=1in
]{geometry}
\usepackage[parfill]{parskip}
\usepackage{hyperref}
\usepackage{amsmath,amssymb,amsthm,mathtools}
\usepackage[T1]{fontenc}
\usepackage{lmodern}
\usepackage{microtype}
\usepackage{natbib}
\usepackage{tikz}
\usetikzlibrary{arrows.meta,calc}
\usepackage{thmtools}
\usepackage{thm-restate}
\usepackage{cleveref}

\usepackage{authblk}

\newtheorem{theorem}{Theorem}[section]

\newtheorem{proposition}[theorem]{Proposition}
\newtheorem{lemma}[theorem]{Lemma}

\newcommand{\R}{\mathbb{R}}
\newcommand{\Q}{\mathbb{Q}}
\newcommand{\N}{\mathbb{N}}
\newcommand{\norm}[1]{\left\lVert #1\right\rVert}
\newcommand{\conv}{\operatorname{conv}}

\newcommand{\poly}{\operatorname{poly}}

\DeclareMathOperator{\cone}{cone}

\DeclareMathOperator{\dist}{dist}

\newcommand{\problemdef}[3]{
	\begin{center}
		\begin{minipage}{0.95\linewidth}
			\noindent
			\textsc{#1}
			\vspace{2pt}\\
			\setlength{\tabcolsep}{3pt}
			\begin{tabular}{@{}l p{0.85\linewidth}@{}}
				\textbf{Input:}    & #2 \\
				\textbf{Question:} & #3
			\end{tabular}
		\end{minipage}
	\end{center}
}

\title{Parameterized Complexity of $L_p$-Lipschitz Constants for Input Convex Neural Networks and $L_p$-Norm Maximization over Zonotopes}
\author[1]{Aritra Das}
\author[2]{Vincent Froese}
\author[3]{Moritz Grillo}
\author[1]{Debayan Gupta}
\author[4]{Christoph Hertrich}
\author[5]{Tharrshann Jayan Logarajah}
\author[6]{Georg Loho}
\author[1]{Mihir More}
\author[4]{Moritz Stargalla}
\affil[1]{Ashoka University, Truth Audit Labs \texttt{$\{$aritra.das, debayan.gupta, mihir.more$\}$@ashoka.edu.in}}
\affil[2]{Technische Universität Berlin, \texttt{vincent.froese@tu-berlin.de}}
\affil[3]{Max Planck Institute for Mathematics in the Sciences, \texttt{moritz.grillo@mis.mpg.de}}
\affil[4]{University of Technology Nuremberg, \texttt{$\{$christoph.hertrich,
moritz.stargalla$\}$@utn.de}}
\affil[5]{University College London, \texttt{tjlogarajah@outlook.com}}
\affil[6]{Freie Universität Berlin, \texttt{georg.loho@math.fu-berlin.de}}
\date{}

\begin{document}
\maketitle

\begin{abstract}
    \noindent
    Lipschitz constants are a standard way to quantify the sensitivity of neural networks to small input perturbations, but computing them is difficult even for shallow ReLU networks.
    We study this problem for two-layer input-convex neural networks (ICNNs), a restricted architecture where nonnegative output weights enforce convexity.
    Computing the $L_p$-Lipschitz constant for these networks is equivalent to maximizing the dual norm over a zonotope.
    While $L_1$- and $L_\infty$-norm maximization on zonotopes admit fixed-parameter and polynomial-time algorithms, respectively, the parameterized complexity of the remaining $L_p$-norms was open.
    We prove that, for every fixed $p\in (1,\infty)\cap \Q$, maximizing the $L_p$-norm over a zonotope in $\R^d$ is W[1]-hard with respect to the dimension $d$.
    Moreover, our hardness results imply that brute-force enumeration algorithms are essentially optimal for this problem under the Exponential Time Hypothesis.
    By duality, the same hardness results hold for computing the $L_p$-Lipschitz constant of two-layer ReLU ICNNs.
    Our proof first establishes the result for the $L_2$-norm and then transfers the construction to arbitrary fixed $p\in (1,\infty)\cap\Q$ using a suitable Taylor approximation.
    These results resolve the corresponding questions regarding the parameterized complexity status for zonotope norm maximization and two-layer ICNN Lipschitz constants.
    
    Our paper resolves an open problem posted at COLT'25 \citep{pmlr-v291-froese25b}.
    There are several independent concurrent papers resolving the same problem.
    Our paper prioritizes a clear exposition of the underlying mathematics and conceptual intuitions behind the proof. Additionally, we explicitly describe our research process including the use of LLMs.
\end{abstract}

\section{Introduction}

Neural networks with rectified linear unit (ReLU) activations play an important role in machine learning.
In practice, such networks are trained on finite datasets and are expected to generalize reliably to unseen inputs.
Nevertheless, even small input perturbations can lead to large changes in the output \citep{szegedy2014}.
The \emph{$L_p$-Lipschitz constant} of a neural network computing a function $f$
is
\[
L_p(f)\coloneq \sup_{x\neq y}\frac{\|f(x)-f(y)\|_p}{\|x-y\|_p},
\]
and
gives a quantitative measure of how sensitive the network is to small input perturbations.
Neural networks with small Lipschitz constants have been observed to be more robust to adversarial attacks and to have better generalization properties. The problem of approximating and computing the exact $L_p$-Lipschitz constant of neural networks has therefore received substantial attention \citep{virmaux2018lipschitz,Weng18,fazlyab2019efficient,JD20}.

Computing the $L_p$-Lipschitz constant exactly for $p\in (0,\infty]$ is NP-hard already for two-layer ReLU networks \citep{virmaux2018lipschitz,froese2026parameterizedhardnesszonotopecontainment}, and approximation within any multiplicative factor is NP-hard for three-layer ReLU networks \citep{JD20,FGS25}.
The hardness of this problem is closely connected to the curse of dimensionality which also afflicts many other problems in machine learning: the size of the search space of these problems grows exponentially with the input dimension~$d$.
In practice, however, high-dimensional data is often assumed to lie near some low-dimensional submanifold of the input space.
Therefore, a natural follow-up question is whether these problems become tractable when the input dimension is small.

This viewpoint has motivated recent work on the \emph{parameterized complexity} of neural network problems, including training \citep{AroraBMM18,FHN22,brand2023new,FH23,ganian2026quantized} and verification \citep{FGS25,froese2026parameterizedhardnesszonotopecontainment}; see also the survey \citep{ganian2026parameterized}.
When allowing arbitrary weights, computing the $L_p$-Lipschitz constant is W[1]-hard with respect to the input dimension $d$ already for two-layer ReLU networks, approximation is W[1]-hard for three-layer ReLU networks, and the problems are not solvable in time $n^{o(d)}\poly(N)$ (where $n$ is the width and $N$ the encoding size) under the Exponential Time Hypothesis (ETH), matching the running times of brute-force enumeration algorithms \citet{froese2026parameterizedhardnesszonotopecontainment}.
The known hardness reductions, however, use both positive and negative output weights and therefore do not imply hardness when putting additional restrictions on the network architecture.

One important such restricted architecture is \emph{input-convex neural networks} (ICNNs) with ReLU activations, where the weights of all but the first layer are restricted to be nonnegative \citep{amos2017input}.
This restriction ensures that an ICNN computes a convex function.
ICNNs provide a direct way to incorporate prior knowledge into the network architecture and have been used, for example, in optimal control, optimal transport, and convex optimization \citep{chen2018optimal,makkuva2020optimal,huang2021convex}.

Another feature of ICNNs is that the architectural restriction can make computational problems easier.
For example, minimizing the function represented by a ReLU ICNN is polynomial-time solvable, whereas the same problem is NP-hard and W[1]-hard with respect to the input dimension for general ReLU networks \citep{amos2017input,froese2026parameterizedhardnesszonotopecontainment}.
Further, the $L_1$-Lipschitz constant of ReLU ICNNs can be computed in polynomial time, and the $L_\infty$-Lipschitz constant is fixed-parameter tractable with respect to the input dimension, while the same problems for general ReLU networks are NP-hard and W[1]-hard with respect to the input dimension \citep{froese2026parameterizedhardnesszonotopecontainment}.

This raises the natural question whether the restriction to ICNNs also makes the problem tractable for the remaining $L_p$-norms with $p\in (1,\infty)$.
In particular, this range includes the $L_2$-norm, which, unlike the $L_1$- and $L_\infty$-norms, is invariant under orthogonal transformations and has been studied extensively in the context of robustness of neural networks \citep{moosavi2016deepfool,cisse2017parseval,cohen2019certified}.
Moreover, the question whether the $L_p$-Lipschitz constant of a (two-layer) ReLU ICNN is fixed-parameter tractable with respect to the input dimension $d$ for $p\in (1,\infty)$ has been posed as an open problem at COLT 2025 \citep{pmlr-v291-froese25b}, and, in the equivalent language of norm-maximization over zonotopes, in \citep{Shenmaier20}.

\subsection{Our Contributions}
We resolve this question negatively by proving W[1]-hardness with respect to the parameter input dimension~$d$, which excludes fixed-parameter tractability under standard complexity assumptions.
Moreover, assuming the ETH, we show that solving these problems via brute-force enumeration algorithms that enumerate the vertices of zonotopes or the linear regions of a neural network is essentially optimal.

For two-layer ReLU ICNNs with input dimension $d$, computing the $L_p$-Lipschitz constant  is equivalent to maximizing the $L_q$-norm over a zonotope in $\R^d$, where $p$ and $q$ are conjugate exponents, that is, $1/p+1/q=1$ \citep{pmlr-v291-froese25b}.
We describe this equivalence in \Cref{sec:icnn_lipschitz}.
The problem $L_p$-norm maximization over zonotopes was studied long before this connection to ICNNs was known.
\citet{BGKL90} showed NP-hardness for every $p\in \N_{\geq 1}$.
For the $L_2$-norm, the problem is known as unconstrained binary quadratic maximization for positive-semidefinite matrices of rank $d$ \citep{FFL05}.
For $p\in [1,\infty]$, the problem has been studied as the \emph{Longest Vector Sum} problem, with applications in pattern recognition, clustering, signal processing, and political analysis \citep{Pyatkin10,Shenmaier18,Shenmaier20}.
Known results include NP-hardness and inapproximability \citep{Pyatkin10,Shenmaier20}, fixed-parameter tractability with respect to $d$ for $p=1$, polynomial time solvability for $p=\infty$ \citep{BP07}, and exact algorithms with running time $n^{d-1}\poly(N)$ \citep{FFL05,Shenmaier20}.
For $p\in (1,\infty)$, fixed-parameter $(1-\varepsilon)$-approximation algorithms are known for every fixed $\varepsilon>0$ \citep{Shenmaier18,froese2026parameterizedhardnesszonotopecontainment}.
A zonotope generated by $n$ vectors in $\mathbb{R}^d$ has $\mathcal{O}(n^{d-1})$ vertices, and the best-known exact algorithms for $p\in (1,\infty)$ essentially enumerate all vertices in $n^{d-1}\poly(N)$ time (where $N$ is the input bit-length)\citep{FFL05,froese2026parameterizedhardnesszonotopecontainment}.

In \Cref{sec:L2hardness}, we give two reductions from \textsc{Multicolored Clique} to \textsc{$L_2^2$-Max on Zonotopes} (see \Cref{sec:prelims} for a formal definition) in which the dimension $d$ depends linearly on the requested clique size $k$.
The two reductions give different insights: the first one is explicit and uses elementary techniques, whereas the second one is non-explicit but contains some interesting geometric insights.
We believe that the underlying techniques used in the second reduction are of interest on their own and might be helpful for other problems.
The two reductions prove W[1]-hardness and tight running time lower bounds based on the ETH.
The main difficulty in the reductions is to keep the dimension linear in $k$, in contrast to classical NP-hardness reductions where the dimension can grow without restriction.
In \Cref{sec:Lphardness}, we then transfer the construction for the $L_2$-norm to \textsc{$L_p^p$-Max on Zonotopes} for every fixed $p\in (1,\infty)\cap \Q$.
More precisely, we prove the following.

\begin{restatable}{thm}{LpHardness}\label{thm:LpHardness}
	For every $p\in (1,\infty)\cap \Q$, \textsc{$L_p^p$-Max on Zonotopes} is W[1]-hard with respect to $d$ and not solvable in time $\rho(d) N^{o(d)}$ (where $N$ is the input bit-length) for any computable function $\rho$ assuming the ETH.
\end{restatable}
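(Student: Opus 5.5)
We reduce from the $L_2$ case established in \Cref{sec:L2hardness}: a parameterized reduction from \textsc{Multicolored Clique} with clique size $k$ to \textsc{$L_2^2$-Max on Zonotopes} in dimension $d=O(k)$. That reduction produces generators $g_1,\dots,g_n\in\Q^d$ and a threshold $t$ with an integrality gap. The clique instance is a yes-instance iff $\max_{x\in\{0,1\}^n}\norm{\sum_i x_i g_i}_2^2\ge t$, and otherwise the maximum is at most $t-1$ after scaling to integers. We transform this instance into an $L_p^p$ instance whose dimension is still $O(d)$, so both W[1]-hardness and the ETH lower bound $\rho(d)N^{o(d)}$ transfer. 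The ETH bound carries over because \textsc{Multicolored Clique} has no $f(k)n^{o(k)}$ algorithm and the new encoding size is polynomial in the old one.

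\textbf{Step 1 (Taylor trick).} For a unit-scale vector $v$ and small $\varepsilon>0$, we use the expansion
\[
\sum_{j}|1+\varepsilon v_j|^p+|1-\varepsilon v_j|^p = 2m + p(p-1)\varepsilon^2\norm{v}_2^2 + O(\varepsilon^4\norm{v}_4^4).
\]
The odd terms cancel by symmetry. This suggests the following construction. Embed each zonotope point $z=\sum x_ig_i$ into $\R^{2d+c}$ using the coordinates $(M+\varepsilon z,\,M-\varepsilon z)$. Here $M$ is a large constant offset, realized by an extra generator or by a fixed translation. A translation can be simulated by one additional generator of huge length that every optimum must include, which we enforce by making its contribution dominate. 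The quantity $\norm{\cdot}_p^p$ of the image then equals a constant plus $p(p-1)\varepsilon^2M^{p-2}\norm{z}_2^2$ plus a remainder that is uniformly bounded by $O(\varepsilon^4 R^4 M^{p-4})$, where $R$ bounds $\norm{z}_\infty$.

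\textbf{Step 2 (separating thresholds).} Choose $\varepsilon$ as a rational of polynomial bit-length so that the remainder term is less than a third of the gap $p(p-1)\varepsilon^2M^{p-2}$ between $t$ and $t-1$. For example, take $\varepsilon R\le 1/\poly(R)$ with $R$ at most exponential in the input size, so $\log(1/\varepsilon)$ is polynomial. The new threshold is the constant plus $p(p-1)\varepsilon^2M^{p-2}(t-1/2)$.

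\textbf{Step 3 (rational $p$, the main obstacle).} The threshold and the constant $2m M^p$ are irrational in general. For $p=a/b$ we handle this by taking $M=N_0^b$ for an integer $N_0$, so that $M^p=N_0^a$ and $M^{p-2}$ are rational. The threshold must then be stated as a rational number that lies strictly inside the gap. Since Step 2 provides a slack of a constant fraction of the gap, we can round the threshold to a rational of polynomial bit-length. The only real technical work is bounding the Taylor remainder rigorously for noninteger $p$, since $|\cdot|^p$ is smooth only away from $0$; the offset $M\gg\varepsilon R$ guarantees we stay away from $0$. The remaining ingredient is ensuring that the handling of the forced offset generator preserves the maximizer structure.
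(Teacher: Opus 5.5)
Your proposal is essentially the paper's proof: it too embeds the $L_2^2$ instance antisymmetrically around the all-ones direction, using generators $(\varepsilon g_i,-\varepsilon g_i)$ plus one offset generator, and then uses the expansion $(1+t)^p+(1-t)^p=2+p(p-1)t^2+O(t^4)$ to carry the $L_2^2$ gap over to the $L_p^p$ objective. The paper simply takes $M=1$ (offset generator $\mathbf 1_{2d}$), so your Step~3 rationality concern disappears, and it handles your deferred offset-generator case in one line: a vertex $(\varepsilon z,-\varepsilon z)$ that omits this generator has $L_p^p$-norm at most $2d(\varepsilon R)^p<d$, which is well below the threshold of roughly $2d$.
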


Hardness of $L_p^p$-maximization does not directly imply hardness of $L_p$-maximization, since replacing the threshold $L$ with $L^{1/p}$ might give an irrational threshold and therefore not a valid instance of \textsc{$L_p$-Max on Zonotopes}.
However, the reduction for $L_p^p$-maximization produces a gap and approximating $L^{1/p}$ with sufficiently precise rational numbers preserves this gap and gives the following as a corollary.

\begin{restatable}{cor}{LpNormHardness}\label{thm:Lp_norm_hardness}
	For every $p\in (1,\infty)\cap \Q$, \textsc{$L_p$-Max on Zonotopes} is W[1]-hard with respect to $d$ and not solvable in time $\rho(d) N^{o(d)}$ (where $N$ is the input bit-length) for any computable function $\rho$ assuming the ETH.
\end{restatable}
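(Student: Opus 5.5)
The plan is to reduce from \textsc{$L_p^p$-Max on Zonotopes}, which by \Cref{thm:LpHardness} is already W[1]-hard and ETH-hard. The reduction keeps the generators unchanged, so $d$ is preserved, and only replaces the threshold $L$ by a rational number close to $L^{1/p}$. For this to be correct, the instances produced by the reduction behind \Cref{thm:LpHardness} must have a gap. That is, there must be rationals $L$ and $\delta>0$, computable in polynomial time and of polynomial bit-length, such that every yes-instance has $\max_{z\in Z}\norm{z}_p^p\ge L$ and every no-instance has $\max_{z\in Z}\norm{z}_p^p\le L-\delta$.

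I expect to read this gap off the construction. In the $L_2$ case the objective over vertices $\sum_i \epsilon_i g_i$ with integer generators takes integer values, so the gap is at least $1$. For general $p$, the Taylor-approximation transfer of \Cref{sec:Lphardness} must itself separate yes- and no-instances by a margin that is at least inverse-polynomial in the instance size. This is how that transfer can work with finite precision at all. Concretely, I would first verify or restate that the reduction produces $L,\delta$ with $\log(1/\delta)$ polynomial in $N$. I expect this to be the main obstacle, since it depends on the exact form of the $L_p$ construction. If the reduction's threshold is not explicitly gapped, I would extract one: the number of attainable vertex values is finite, and the Taylor remainder bound certifies a margin.

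Given the gap, we have $x\mapsto x^{1/p}$ on the relevant range. The zonotope lies in a ball of radius $R\le\sum_i\norm{g_i}_p$, which is at most $2^{\poly(N)}$, so $L\le R^p$. By the mean value theorem,
\[
L^{1/p}-(L-\delta)^{1/p}\ \ge\ \tfrac{\delta}{p}\,L^{1/p-1},
\]
and for $p>1$ this is at least $\eta:=\tfrac{\delta}{p}R^{1-p}$ when $R\ge1$. Since $p=a/b$ is rational, $\eta\ge 2^{-\poly(N)}$ and has polynomial encoding. Compute a rational $L'$ with $|L'-L^{1/p}|<\eta/2$, for example by bisection on $t$ using the exact rational comparison $t^{a}\lessgtr L^{b}$. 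This takes polynomially many steps with polynomial bit-length. Then replace the threshold with $L'':=L'-\eta/2$, or choose $L'$ inside the interval $\bigl((L-\delta)^{1/p},L^{1/p}\bigr]$. With this choice, $\max\norm{z}_p\ge L''$ holds exactly on yes-instances.

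Finally, the new instance has the same dimension $d$ and bit-length $\poly(N)$. Hence this is a parameterized reduction, and W[1]-hardness transfers. Moreover, an algorithm running in time $\rho(d)N^{o(d)}$ for \textsc{$L_p$-Max on Zonotopes} would yield one for \textsc{$L_p^p$-Max}, because a polynomial blow-up $N^{c}$ keeps the exponent $o(d)$. This contradicts \Cref{thm:LpHardness} under the ETH.
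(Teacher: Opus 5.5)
Your proposal is correct and matches the paper's proof: the gap you flag as the main obstacle is supplied explicitly by \Cref{prop:L2toLpTransfer} (thresholds $L^*$ and $L^*-\Delta^*$ of polynomial encoding size), and the paper then applies \Cref{lem:bisection}, which is your mean-value-theorem bound plus bisection with the exact comparison $q^a$ versus $A^b$, to pick a rational $T$ with $(L^*-\Delta^*)^{1/p}<T\le (L^*)^{1/p}$. One small fix: for non-integer $p$ your $\eta=\tfrac{\delta}{p}R^{1-p}$ need not be rational, so use a rational lower bound such as $\delta/(p\max\{1,L\})$, as the paper does via $L^{1-1/p}\le\max\{1,L\}$.
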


The equivalence between $L_p$-maximization over zonotopes and computing the $L_q$-Lipschitz constant of two-layer ICNNs from \Cref{sec:icnn_lipschitz} then gives the following as a corollary.

\begin{restatable}{cor}{IcnnLipschitz}
\label{cor:icnn-lipschitz-hardness}
For every $p\in(1,\infty)\cap\Q$, \textsc{Two-Layer ICNN $L_p$-Lipschitz Constant} is W[1]-hard with respect to the input dimension and not solvable in time $\rho(d)N^{o(d)}$ (where $N$ is the input bit-length) for any computable function $\rho$ assuming the ETH.
\end{restatable}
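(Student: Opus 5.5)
The plan is to combine the $L_p$-hardness of \Cref{thm:Lp_norm_hardness} with the equivalence of \Cref{sec:icnn_lipschitz}, using conjugate exponents. Fix $p\in(1,\infty)\cap\Q$ and let $q$ satisfy $1/p+1/q=1$. Then $q=p/(p-1)$ is again a rational number in $(1,\infty)$. By \Cref{thm:Lp_norm_hardness} applied to $q$, \textsc{$L_q$-Max on Zonotopes} is W[1]-hard with respect to $d$. Under the ETH it is also not solvable in time $\rho(d)N^{o(d)}$. It therefore suffices to give a parameterized reduction from \textsc{$L_q$-Max on Zonotopes} to \textsc{Two-Layer ICNN $L_p$-Lipschitz Constant}. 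This reduction must keep the dimension $d$ unchanged, run in polynomial time, and blow up the encoding size only polynomially.

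For the reduction, take an instance consisting of generators $g_1,\dots,g_n\in\Q^d$ and a rational threshold $L$. Build the two-layer ICNN
\[
f(x)=\sum_{i=1}^n \max\{0,\langle g_i,x\rangle\}.
\]
It has input dimension $d$, one hidden neuron per generator, and all output weights equal to $1\ge 0$. Each gradient of $f$ on a linear region has the form $\sum_{i\in S} g_i$ for some subset $S$. So the set of gradients, together with its convex hull, is the zonotope $Z=\sum_i[0,g_i]$.

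Next I use the fact that $f$ is convex and piecewise linear. For such a function, $L_p(f)=\sup_x\|\nabla f(x)\|_q$ over the regions where $f$ is differentiable. Every vertex of $Z$ is attained as such a gradient: for a generic $x$ maximizing $\langle x,\cdot\rangle$ over $Z$, the active set is exactly $S=\{i:\langle g_i,x\rangle>0\}$. Since a convex norm is maximized over a polytope at a vertex, we get $L_p(f)=\max_{z\in Z}\|z\|_q$. The instance $(f,L)$ is therefore a yes-instance if and only if $(g_1,\dots,g_n,L)$ is. This is the equivalence stated in \Cref{sec:icnn_lipschitz}, and I would simply invoke it.

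Finally, the parameter is preserved: the ICNN's input dimension equals $d$. The encoding size of the network is linear in that of the zonotope instance. Hence an algorithm for the ICNN problem running in time $\rho(d)N^{o(d)}$, or an FPT algorithm, would transfer directly to \textsc{$L_q$-Max on Zonotopes}, contradicting the previous result. The only point that needs care is making sure the problem definitions match, namely whether the threshold is compared with $\ge$ and whether it is rational. Since \Cref{thm:Lp_norm_hardness} already handles the rational threshold for the norm itself rather than its $q$-th power, no rounding issue arises and the threshold can be copied verbatim.
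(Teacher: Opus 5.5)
Your proposal is correct and follows the paper's argument. You pass to the conjugate exponent $q=p/(p-1)\in(1,\infty)\cap\Q$, invoke the $L_q$-Max hardness, and use the bias-free ICNN $f(x)=\sum_i\max\{0,g_i^\top x\}$ from Section~3, whose $L_p$-Lipschitz constant equals $\max_{z\in Z}\|z\|_q$; you also spell out what the paper's one-line corollary leaves implicit (dimension and encoding size are preserved, and the threshold is copied verbatim).
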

We note that hardness for two-layer networks implies hardness for deeper networks with $\ell\geq 2$ layers by simply concatenating the two-layer network with additional layers computing the identity map.

Finally, we actively contribute to the discussion of the future of mathematics and theoretical research in the age of AI by describing our research process using LLMs in \Cref{sec:AIprocess}, advocating a focus on clear and intuitive exposition.

\subsection{Concurrent Results}

Independently of our work, a proof for W[1]-hardness of \textsc{$L_2$-Max on Zonotopes} was recently obtained by an agentic AI pipeline.\footnote{\label{fn:ai-pipeline}The proof was made available at \url{https://github.com/Pengbinghui/pipeline-math/blob/main/papers/zonotope.pdf}.}
The proof gives a reduction from \textsc{Multicolored Clique} with dimension $2k+1$, and works with support functions of centrally symmetric zonotopes.
We adopt a similar viewpoint in \Cref{sec:implicit}, but present a different reduction using theoretical insights from polyhedral geometry with $2k$ instead of $2k+1$ dimensions.
We discuss our own usage of and experiences with AI in \Cref{sec:AIprocess}.
A few days ago, again independently of our work, proofs showing W[1]-hardness of \textsc{$L_2$-Max on Zonotopes} \citep{dewasurendra2026convex} and W[1]-hardness of \textsc{$L_p^p$-Max on Zonotopes} for every fixed rational $p\in (1,\infty)$ \citep{cao2026ellpnorm} were published.
\citet{dewasurendra2026convex} give a reduction from \textsc{Partitioned Subgraph Isomorphism} with dimension $11k+1$, which shows that under the ETH, no $\rho(k)n^{o(k/\log k)}$ time algorithm exists for \textsc{$L_2$-Max on Zonotopes}.
\citet{cao2026ellpnorm} give a reduction from binary CSP and then specialize it to a reduction from \textsc{Multicolored Clique} with dimension $2k+1$, which shows that under the ETH, no $\rho(k)n^{o(k)}$ time algorithm exists for \textsc{$L_p^p$-Max on Zonotopes}.
We obtain the same running time lower bound.
Moreover, \citet{cao2026ellpnorm} also present deterministic fixed-parameter $(1-\varepsilon)$-approximation algorithms for every fixed $\varepsilon>0$ running in time $f_p(d)\varepsilon^{-(d-1)/2}\poly(N, \log(1/\varepsilon))$, and rule out algorithms with $(1/\varepsilon)^{o(d)}$ dependence under the ETH.
It is worth noting that all proofs, including ours, construct points lying exactly or approximately on the two-dimensional $L_2$-unit circle (or $L_p$-unit circle in \citet{cao2026ellpnorm}) that define a zonotope and then use this zonotope to encode choices of nodes or variables, see Section~2.1 in the AI agentic pipeline proof (\Cref{fn:ai-pipeline}), Lemma~4 of \citet{dewasurendra2026convex}, and Lemma~3.2 of \citet{cao2026ellpnorm}.
Despite these similarities, the details of the reductions are all quite different, as outlined above.
In particular, we present a reduction from \textsc{Multicolored Clique} with dimension $2k$ instead of $11k+1$ or $2k+1$ for $p=2$ in \Cref{sec:implicit}, and provide geometric intuition for all of our proofs.

\subsection{Further Related Work}
Computing Lipschitz constants of two-layer ReLU ICNNs is not the only neural network problem that can be formulated in terms of maximizing certain norms over zonotopes:
for example, maximizing a zonotope-induced norm over a zonotope corresponds to the zonotope containment problem, which in turn is equivalent to deciding whether a two-layer ReLU network attains a positive output value \citep{KULMBURG202184,froese2026parameterizedhardnesszonotopecontainment}.
This problem is also known to be W[1]-hard with respect to the input dimension $d$.

$L_p$-norm maximization on general polytopes in halfspace representation is W[1]-hard with respect to the dimension $d$ for $p\in \N_{\geq 2}$ and fixed-parameter tractable for $p=1$ \cite{KKW15}.
This result does not extend to zonotopes, since zonotopes are given in generator representation, and their halfspace representation can be exponentially larger than their generator representation.
The parameter dimension $d$ has also been considered for many other geometry problems.
Several reductions in this area, as also ours, use specially constructed points on the two-dimensional $L_2$-unit sphere that are sometimes also called ``scaffolding point sets'', and embed them into higher dimensions \cite{GKR09,CGKMR11,KKW15,FHN22}.

\section{Preliminaries}
\label{sec:prelims}
\textbf{Notation.}
We write $\N=\{0,1,\dots\}$ and set $[n]\coloneq \{1,\dots, n\}$ for $n\in \N_{\geq 1}$.
A function $f:\R^d\to \R$~is \emph{positively homogeneous} if $f(\lambda x)=\lambda f(x)$ for all $x\in \R^d$ and $\lambda\geq 0$.
A vector $x\in \R^d$ is \emph{decreasing} if $x_1\geq \dots \geq x_d$.

\textbf{$L_p$-Norms.}
For $p\in [1,\infty)$ and a vector $x\in \R^d$, the \emph{$L_p$-norm} of $x$ is $\norm{x}_p\coloneq \left(\sum_{i=1}^d|x_i|^p\right)^{1/p}$.
For $p=\infty$, the \emph{$L_\infty$-norm} of $x$ is $\norm{x}_\infty\coloneq \max_{i\in [d]}|x_i|$.
For $p\in[1,\infty]$, the \emph{$L_p$-Lipschitz constant} of a function $f:\R^d\to \R^m$ is $L_p(f)\coloneq \sup_{x\neq y}\frac{\norm{f(x)-f(y)}_p}{\norm{x-y}_p}$.

\textbf{ReLU Neural Networks.}
A two-layer (one-hidden-layer) ReLU network with scalar output is given by input weights $W \in \R^{n \times d}$ with rows $w_1,\dots, w_n\in \R^d$, a bias vector $b \in \R^n$, output weights $a \in \R^n$, and an output bias $B\in \R$.
It computes the continuous piecewise-linear function $f \colon \R^d \to \R$ given by $f(x) = B+\sum_{i=1}^n a_i\max\{0,w_i^\top x + b_i\}$, where $\max\{0,x\}$ is the ReLU activation function.
After deleting zero output weights and rescaling the weights, that is, replacing $(w_i,b_i)$ with $|a_i|(w_i,b_i)$ and $a_i$ with $\text{sign}(a_i)\in \{-1,1\}$, we can assume $a\in \{-1,1\}^n$ without loss of generality.
If the network is bias-free, that is, $b=0$ and $B=0$, then $f$ is positively homogeneous.
If $a\geq 0$, then $f$ is convex and the network is called a two-layer ReLU \emph{input-convex neural network} (ICNN).

\textbf{Polyhedra and Zonotopes.}
A \emph{polyhedron} $P$ is the intersection of finitely many closed halfspaces $P=\{x\in \R^d: Ax\leq b\}$.
A \emph{face} of $P$ is either the empty set or the set of maximizers $\arg\max\{c^\top x: x\in P\}$ of a linear function over $P$.
Faces of dimension zero are called \emph{vertices}; faces of dimension $\dim(P)-1$ are called \emph{facets}.
A \emph{polytope} is a bounded polyhedron; equivalently, it is the convex hull of finitely many points.
The inclusion-wise minimal set $V$ satisfying $\conv(V)=P$ is the set of vertices $V(P)$ of $P$.
Given a matrix $G=(g_1,\ldots,g_n)\in\R^{d\times n}$, the associated \emph{zonotope} is the polytope
$
Z(G)\coloneq\sum_{i=1}^n \conv(\{0,g_i\})=\left\{\sum_{i=1}^n\lambda_ig_i: \lambda\in [0,1]^n\right\}.
$
The matrix $G$ is the \emph{generator matrix} of $Z(G)$, and its columns are the \emph{generators} of $Z(G)$.

\textbf{Parameterized Complexity.}
A \emph{parameterized} problem consists of instances~$(x,k)$, where~$x$ is the classical problem instance and~$k\in\N$ is a \emph{parameter}. 
A parameterized problem is in the class XP if it is polynomial-time solvable for every constant parameter value, that is, in $\mathcal{O}(|x|^{f(k)})$ time for an arbitrary function~$f$ depending only on the parameter~$k$, where $|x|$ denotes the encoding bit-length of the instance $x$.
A parameterized problem is \emph{fixed-parameter tractable} (that is, contained in the class FPT $\subseteq$ XP) if it is solvable in~$f(k)\cdot |x|^{\mathcal{O}(1)}$ time for an arbitrary function~$f$ solely depending on~$k$.
The class W[1] contains parameterized problems which are presumably not in FPT. Such \emph{W[1]-hard} problems are defined via \emph{parameterized reductions}: A parameterized reduction from $L$ to $L'$ ($L\le_{\text{fpt}}L'$) is an algorithm that maps an instance $(x,k)$ in $f(k)\cdot |x|^{\mathcal{O}(1)}$ time to an instance $(x',k')$ such that $k'\leq g(k)$ for some function $g$ and $(x,k)\in L$ holds if and only if $(x',k')\in L'$.
We refer to \citet{DF13} for further details of parameterized complexity.

The \emph{Exponential Time Hypothesis} (ETH) \citep{IP01} states that 3-SAT on $n$ variables cannot be solved in $2^{o(n)}$ time.
The ETH implies FPT$\neq$W[1] as well as running time lower bounds for various problems, such as that \textsc{Clique} cannot be solved in $\rho(k)n^{o(k)}$ time, where $k$ is the size of the requested clique and $n$ is the number of nodes in the graph \citep{cygan2015parameterized}.

\subsection{Problem Definition}
For $p\in [1,\infty]$, we consider the following two problems.
\problemdef
{Two-Layer ICNN $L_p$-Lipschitz Constant}
{A weight matrix $W=(w_1,\dots, w_n)^\top\in \Q^{n\times d}$, biases $b\in \Q^n,\, B\in \Q$, and $L\in \Q$.}
{Is $L_p(f)\geq L$ with $f(x)=B+\sum_{i=1}^n [w_i^\top x + b_i]_+$?}

\problemdef
{$L_p$-Max on Zonotopes}
{A rational generator matrix $G\in\Q^{d\times n}$ and $L\in\Q$.}
{Is $\max_{z\in Z(G)} \norm{z}_p \geq L$?}

For $p\in [1,\infty)$, \textsc{$L_p^p$-Max on Zonotopes} denotes the same problem with the question $\max_{z\in Z(G)} \norm{z}_p^p \geq L$.
Assuming access to an oracle that determines in polynomial time whether $\norm{z}_p\geq L$ for $z\in \R^d,\,L\in \R$, \textsc{$L_p$-Max on Zonotopes} belongs to XP with respect to $d$ (as mentioned earlier): one can simply enumerate all vertices of the zonotope and evaluate their $L_p$-norm.

We use the following alternative formulation of \textsc{$L_p^p$-Max on Zonotopes} throughout the paper.
\begin{lemma}\label{lem:l2max_binaryformulation}
	For every generator matrix $G=(g_1,\ldots,g_n)\in \R^{d\times n}$ and every $p\in [1,\infty)$,
	\[
	\max_{z\in Z(G)} \norm{z}_p^p
	=
	\max_{\sigma\in\{0,1\}^n}\norm{\sum_{i=1}^n \sigma_i g_i}_p^p.
	\]
\end{lemma}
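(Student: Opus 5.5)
The plan is to prove the two inequalities separately. The direction ``$\geq$'' is immediate: for every $\sigma\in\{0,1\}^n$, the point $\sum_i\sigma_i g_i$ lies in $Z(G)$, because $\sigma\in[0,1]^n$. So the maximum over $Z(G)$ is at least the maximum over binary combinations. Both maxima exist, since $Z(G)$ is compact and $\norm{\cdot}_p^p$ is continuous.

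For ``$\leq$'', I will use convexity. The map $z\mapsto\norm{z}_p^p=\sum_j|z_j|^p$ is convex for $p\geq 1$, as a sum of convex functions $t\mapsto|t|^p$. Now consider $h(\lambda)\coloneq \norm{\sum_i\lambda_i g_i}_p^p$ on the cube $[0,1]^n$. It is a convex function composed with a linear map, so it is convex on $[0,1]^n$. A convex function on a polytope attains its maximum at a vertex. To see this, write any $\lambda\in[0,1]^n$ as a convex combination $\lambda=\sum_k\mu_k v_k$ of cube vertices $v_k\in\{0,1\}^n$. Jensen's inequality then gives $h(\lambda)\leq\sum_k\mu_k h(v_k)\leq\max_k h(v_k)$. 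Since every $z\in Z(G)$ equals $\sum_i\lambda_i g_i$ for some $\lambda\in[0,1]^n$, we get $\norm{z}_p^p\leq\max_{\sigma\in\{0,1\}^n}h(\sigma)$.

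A coordinatewise alternative avoids the convex-combination step. Fix all coordinates of $\lambda$ except $\lambda_i$. Then $h$ restricted to $\lambda_i\in[0,1]$ is a convex function of one variable, so it is maximized at $\lambda_i\in\{0,1\}$. Rounding the coordinates to $0$ or $1$ one at a time never decreases $h$, and after $n$ steps we reach a binary $\sigma$ with $h(\sigma)\geq h(\lambda)$.

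I do not expect a real obstacle. The only point needing care is justifying convexity of $\norm{\cdot}_p^p$, which follows from convexity of $|t|^p$ for $p\geq1$. Alternatively, $\norm{\cdot}_p^p$ is the composition of the norm, which is convex and nonnegative, with $s\mapsto s^p$, which is convex and nondecreasing on $[0,\infty)$.
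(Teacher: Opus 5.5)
Your proof is correct. It differs from the paper's in where the vertex argument is carried out. The paper applies ``a convex function on a polytope is maximized at a vertex'' directly to $Z(G)$. It then cites Ziegler for the structural fact that every vertex of a zonotope has the form $\sum_{i=1}^n\sigma_i g_i$ with $\sigma\in\{0,1\}^n$. You instead pull the objective back along the linear map $\lambda\mapsto G\lambda$ to the cube $[0,1]^n$. There convexity is preserved and the vertices are trivially the binary vectors, so you need no characterization of zonotope vertices at all. Jensen's inequality then finishes the argument, or alternatively your coordinatewise rounding, which even gives a constructive way to improve any $\lambda$ to a binary $\sigma$ without decreasing the objective. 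Your route is therefore more self-contained and elementary, and you also make the easy direction ``$\geq$'' and the existence of the maxima explicit. The paper's route instead relies on the same vertex description of $Z(G)$ that reappears in \Cref{sec:icnn_lipschitz}, where vertices of $Z(W^\top)$ are matched with linear regions of the ICNN, so one citation serves both purposes.
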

\begin{proof}
Since $\norm{\cdot}_p^p$ is convex, it attains its maximum over the polytope $Z(G)$ at a vertex.
By~\citep{Ziegler95}, the vertices are vectors of the form $\sum_{i=1}^n \sigma_i g_i\in Z(G)$ for $\sigma\in \{0,1\}^n$, and each such vector is at least contained in $Z(G)$.
For every vertex $v$, let $\sigma(v)\in \{0,1\}^n$ be its corresponding binary vector.
It follows that 
	\[
	\max_{z\in Z(G)} \norm{z}_p^p
	=
	\max_{v\in V(Z(G))}\norm{\sum_{i=1}^n \sigma(v)_i g_i}_p^p
	=
	\max_{\sigma\in\{0,1\}^n}\norm{\sum_{i=1}^n \sigma_i g_i}_p^p.
	\qedhere
	\]
\end{proof}

\section{\texorpdfstring{$L_p$}{L_p}-Lipschitz-constant for Input-convex  ReLU Networks}
\label{sec:icnn_lipschitz}
In this section, we discuss the equivalence between \textsc{$L_p$-Max on Zonotopes} and \textsc{Two-Layer ICNN $L_q$-Lipschitz Constant}, where $p,q\in[1,\infty]$ are conjugate exponents, that is, $1/p+1/q=1$ with the convention $1/\infty = 0$.
This equivalence will allow us to later transfer the hardness for \textsc{$L_p$-Max on Zonotopes} to \textsc{Two-Layer ICNN $L_q$-Lipschitz Constant}.
The equivalence is a consequence of a well-known duality between convex continuous piecewise linear and positively homogeneous functions from $\R^d$ to $\R$ and polytopes in $\R^d$ from tropical geometry, which we now briefly sketch, see, e.g., \citep{zhang2018tropical} for more details.

Every ReLU ICNN computes a convex continuous piecewise linear function of the form $f:\R^d\to \R,\; x\mapsto \max\{a_1^\top x +b_1,\dots, a_k^\top x+b_k\}$.
Such a function subdivides $\R^d$ into a set of disjoint polyhedral regions, also called \emph{linear regions}, on which the function is affine linear.
More precisely, there exists a finite set $\mathcal{C}$ of disjoint polyhedra with $\R^d=\bigcup_{C\in \mathcal{C}}C$ and $f(x)=a_C^\top x+b_C$ for all $x\in C$ and some $a_C \in \R^d,\, b_C \in \R$.
Let $p,q\in[1,\infty]$ be conjugate exponents, that is, $1/p+1/q=1$ (with the convention $1/\infty = 0$).
Then, using the well-known equality $L_p(g)=\max_{x\in \R^d}\|\nabla g(x)\|_q$ for smooth functions $g:\R^d\to \R$ \citep{JD20}, we obtain that the $L_p$-Lipschitz constant of $f$ on the region $C$ is equal to $\|a_C\|_q$, which implies
\[
L_p(f)=\max_{C\in \mathcal{C}} \norm{a_C}_q.
\]
Hence, computing the $L_p$-Lipschitz constant amounts to maximizing the dual norm of the coefficients vectors of the linear regions of $f$.
Observe that the function $f$ of the ICNN has the same $L_p$-Lipschitz constant as the positively homogeneous function $\max\{a_1^\top x,\dots, a_k^\top x\}$ computed by the same ICNN with all biases set to zero.
This allows us to restrict to bias-free ICNNs without loss of generality.

There is a bijection $\varphi$ between the set $\mathcal{F}_d$ of convex continuous piecewise linear and positively homogeneous functions from $\R^d$ to $\R$ computed by bias-free ReLU ICNNs and the set $\mathcal{P}_d$ of polytopes in $\R^d$, given by $\varphi\left(\max\{a_1^\top x,\dots, a_k^\top x\}\right)=\conv\{a_i: i\in [k]\}$.
The inverse $\varphi^{-1}:\mathcal{P}_d\to \mathcal{F}_d$ gives the \emph{support function} $\varphi^{-1}(P)(x)=\max_{y\in P}y^\top x$ of a polytope $P\in \mathcal{P}_d$.
Further, this bijection satisfies $\varphi(f+g)=\varphi(f)+\varphi(g)$ (where the $+$ becomes a Minkowski sum) and $\varphi(\max\{f,g\})=\conv\{\varphi(f)\cup \varphi(g)\}$ for all $f,g\in \mathcal{F}_d$.

\begin{figure}
    \centering
    \resizebox{0.8\linewidth}{!}{
	\includegraphics[page=1, width=\textwidth]{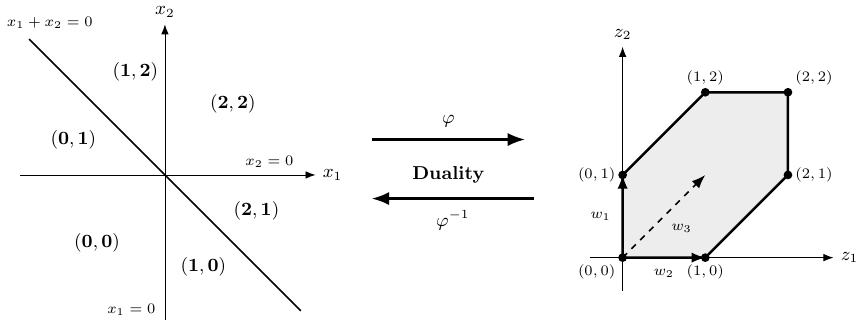}
    }
    \caption{
    Illustration of the duality between zonotopes and bias-free two-layer ICNNs for $W=(w_1,w_2,w_3)^\top$ with $w_1=(0,1)$, $w_2=(1,0)$, and $w_3=(1,1)$.
    The ICNN computes $f\colon\R^2\to \R,\, x\mapsto \sum_{i=1}^3\max\{0,w_i^\top x\}$ (left) and the corresponding zonotope is $Z(W^\top)=\sum_{i=1}^3\conv\left(\{0,w_i^\top x\}\right)$ (right).
    The hyperplanes $w_i^\top x=0$ partition the input space into six linear regions.
    The gradient on a region is the sum of a subset of generators (bold vectors on the left) and is exactly one of the six vertices of $Z(W^\top)$.
    }
    \label{fig:icnn_zonotope_duality}
\end{figure}

Applying this bijection to a bias-free two-layer ReLU ICNN that computes $f:\R^d\to \R,\; x\mapsto \sum_{i=1}^n\max\{0,w_i^\top x\}$ with weight matrix $W=(w_1,\dots, w_n)^\top\in \R^{n\times d}$ shows that the polytope corresponding to $f$ is
\[
\varphi(f)
=\varphi\left(\sum_{i=1}^n\max\{0,w_i^\top x\}\right)
=\sum_{i=1}^n\varphi\left(\max\{0,w_i^\top x\}\right)
=\sum_{i=1}^n\varphi\left(\max\{0,w_i^\top x\}\right)
=\sum_{i=1}^n\conv(\{0,w_i\})
=Z(W^\top),
\]
a zonotope.
It is a well-known fact that the vertices of $Z(W^\top)$ are in bijection to the regions of the hyperplane arrangement $\mathcal{H}_W=\left(\{x\in \R^d: w_i^\top x= 0\}\right)_{i\in [n]}$ defined by $W$ \citep{Ziegler95}, which are precisely the linear regions of $f$.
Moreover, the coefficient vector $a_C$ of a linear region $C\in \mathcal{C}$ is precisely the vertex $v$ of $Z(W^\top)$ that corresponds to the linear region (see \Cref{fig:icnn_zonotope_duality}).
Since every vertex $v$ of $Z(W^\top)$ is of the form $\sum_{i=1}^n \sigma_i w_i$ for some $\sigma_i\in \{0,1\}^n$ \citep{Ziegler95}, we obtain
\[
L_p(f)
=\max_{\sigma\in \{0,1\}^n}\norm{\sum_{i=1}^n \sigma_i w_i}_q
=\max_{z\in Z(W^\top)}\|z\|_q.
\]

This proves the equivalence between \textsc{Two-Layer ICNN $L_p$-Lipschitz Constant} and \textsc{$L_q$-Max on Zonotopes} when $p$ and $q$ are conjugate exponents.

\section{Two Hardness Reductions for \texorpdfstring{\textsc{$L_2^2$-Max on Zonotopes}}{L_2^2-Max on Zonotopes}}
\label{sec:L2hardness}

In this section, we give two parameterized reductions for \textsc{$L_2^2$-Max on Zonotopes}, each giving different insights.
The first reduction with $2k+1$ dimensions explicitly constructs the generators of the zonotope and relies only on elementary techniques, whereas the second reduction with $2k$ dimensions contains some interesting geometric insights and obtains the generators of the zonotope non-explicitly from the solution of linear programs.
In \Cref{sec:Lphardness}, we use these reductions to prove hardness for \textsc{$L_p^p$-Max on Zonotopes} for every fixed $p\in (1,\infty)\cap \Q$.
We reduce from the following problem.
\problemdef{Multicolored Clique}
{A graph $G=(V=V_1\dot\cup\cdots\dot\cup V_k, E)$, where each node in $V_i$ has color $i$.}
{Does $G$ have a $k$-colored clique (a clique with exactly one node of each color)?}
\textsc{Multicolored Clique} is NP-hard, W[1]-hard with respect to $k$, and not solvable in time $\rho(k) |V|^{o(k)}$ for any computable function $\rho$ assuming the ETH \citep{Cygan15}.

\subsection{Explicit Reduction with \texorpdfstring{$2k+1$}{2k+1} Dimensions}
\label{sec:reduction-explicit}

In this subsection, we give an elementary explicit reduction with $2k+1$ dimensions.
Given an instance $G=(V=V_1\dot\cup\cdots\dot\cup V_k, E)$, we will construct a generator matrix $A_G\in \Q^{d\times (|V|+|E|+1)}$ of polynomial encoding size with $d\in \mathcal{O}(k)$ and values $L,\Delta\in \Q_{> 0}$ such that $\max_{z\in Z(A_G)} \norm{z}_2^2> L$ if $G$ is a yes-instance and $\max_{z\in Z(A_G)} \norm{z}_2^2< L-\Delta$ otherwise.

We first describe a gadget that we use in the reduction.
It constructs rational vectors $u_0,\dots, u_m$ on the $L_2$-norm unit circle, except for $u_0=0$, and difference vectors $w_i=u_i-u_{i-1}$ such that for a binary vector $\sigma\in\{0,1\}^m$, the sum $\sum_{i=1}^m\sigma_i w_i$ lies on the unit circle if $\sigma$ is nonzero and decreasing, and lies strictly inside the unit circle otherwise (see \Cref{fig:gadget} for an illustration).
\begin{figure}
	\centering
    \resizebox{0.725\linewidth}{!}{
	\includegraphics[page=2, width=\textwidth]{tikz_figures.pdf}
    }
	\caption{
		The vectors from \Cref{lem:gadget} for $m=5$.
		The point $w_1+w_2+w_3+w_5$ (red) corresponds to the not decreasing binary vector $\sigma=(1,1,1,0,1)$ and lies in a circle with radius $\sqrt{1-\delta_m}$ (red).
 	}
	\label{fig:gadget}
\end{figure}
\begin{lemma}\label{lem:gadget}
	Let $m\in\mathbb{N}_{\geq 1}$, let $u_0=(0,0)$, and, for every
	$i\in[m]$, define
	\[
	t_i=\frac{i}{2m},
	\qquad
	u_i=
	\left(
	\frac{1-t_i^2}{1+t_i^2},
	\frac{2t_i}{1+t_i^2}
	\right),
	\qquad
	w_i=u_i-u_{i-1}.
	\]
	For every nonzero decreasing $\sigma\in\{0,1\}^m$, we have $\left\|\sum_{i=1}^m \sigma_i w_i\right\|_2^2=1$.
    For every $\sigma\in\{0,1\}^m$ that is either zero or not decreasing, we have
	\[
	\left\|\sum_{i=1}^m \sigma_iw_i\right\|_2^2
	\leq 1-\delta_m,
	\qquad
	\delta_m:=\frac{16}{25m^2}.
	\]
	
	Moreover, for $m\geq 2$,
	\[
	\min_{i,j\in [m],\,i\neq j}
	\|u_i-u_j\|_2^2
	\geq
	\delta_m
	.
	\]
\end{lemma}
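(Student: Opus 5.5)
The plan is to work with the angles and the chord-length formula of the rational parametrization. Write $u_i=(\cos\theta_i,\sin\theta_i)$ with $\theta_i=2\arctan t_i$ for $i\in[m]$. Since $t_i\in(0,\tfrac12]$, all angles lie in $(0,2\arctan\tfrac12]\subset(0,\pi/2)$ and are strictly increasing in $i$. The one identity I plan to use throughout is the standard formula for the rational parametrization of the circle:
\[
\|u_i-u_j\|_2=\frac{2|t_i-t_j|}{\sqrt{(1+t_i^2)(1+t_j^2)}}=2\sin\frac{|\theta_i-\theta_j|}{2}.
\]
Using $|t_i-t_j|\geq \frac{1}{2m}$ and $1+t^2\leq\frac54$, this gives $\|u_i-u_j\|_2\geq \frac{4}{5m}$, so $\|u_i-u_j\|_2^2\geq\frac{16}{25m^2}=\delta_m$. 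This proves the ``moreover'' part immediately. The same bound shows that every chord between two distinct $u_i$'s has length at least $\sqrt{\delta_m}$, and I will reuse that in the main case.

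For the main claims I decompose $\sigma$ into maximal blocks of consecutive ones, $[a_1,b_1],\dots,[a_s,b_s]$ with $b_r<a_{r+1}-1$. By telescoping, $S:=\sum_i\sigma_iw_i=\sum_{r}(u_{b_r}-u_{a_r-1})$.
\begin{itemize}
\item If $\sigma$ is nonzero and decreasing, then $s=1$ and $a_1=1$, so $S=u_{b_1}-u_0=u_{b_1}$, which has norm $1$.
\item If $\sigma=0$, then $S=0$ and the bound is trivial.
\item Otherwise, rewrite $S=u_{L}-G$ with $L=b_s$ and $G=\sum_{r}g_r$. The gap chords are $g_r=u_{a_{r+1}-1}-u_{b_r}$ for $r<s$, together with $g_0=u_{a_1-1}-u_0$ when $a_1>1$.
\end{itemize}
Not being decreasing means at least one gap chord is present. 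Then $\|S\|_2^2=1-\bigl(2\langle u_L,G\rangle-\|G\|_2^2\bigr)$, and the goal becomes
\[
2\langle u_L,G\rangle-\|G\|_2^2\;\geq\;\delta_m.
\]

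The key computation uses the chords $g=u_y-u_x$ with $x<y\leq L$. For these, $\langle u_L,g\rangle=\cos(\theta_L-\theta_y)-\cos(\theta_L-\theta_x)=2\sin\frac{\theta_y-\theta_x}{2}\sin\bigl(\theta_L-\tfrac{\theta_x+\theta_y}{2}\bigr)=\|g\|_2\sin\bigl(\theta_L-\tfrac{\theta_x+\theta_y}2\bigr)$. The angular distance from the midpoint of gap $r$ to $\theta_L$ is at least half of gap $r$'s arc, plus the arcs of all later gaps, plus the arc of the last block $[a_s,b_s]$. The last block is nonempty, so its arc spans at least one step. The plan is then to use subadditivity and concavity of $\sin$ on $[0,\pi/2]$ to turn this into
\[
2\sin\bigl(\theta_L-\tfrac{\theta_x+\theta_y}{2}\bigr)\;\geq\;\|g_r\|_2+2\sum_{r'>r}\|g_{r'}\|_2+2\|u_{b_s}-u_{a_s-1}\|_2 .
\]
Summing over $r$ gives $2\langle u_L,G\rangle\geq(\sum_r\|g_r\|_2)^2+\|g_{\text{last}}\|_2\cdot\|u_{b_s}-u_{a_s-1}\|_2\cdot 2$. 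The first term dominates $\|G\|_2^2$ by the triangle inequality. In the second term both factors are chords between distinct points, so each is at least $\sqrt{\delta_m}$ by the chord bound. The case $g_0$, whose endpoint $u_0=0$ is not on the circle, needs separate handling. There I would either bound $\|S\|_2$ directly by the total arc length, which is at most $\theta_m<1$ when $S$ consists only of chords, or treat $u_0$ by a direct inner-product estimate.

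The main obstacle is making the sine inequality above rigorous with explicit constants. The difficulty is that $2\sin(\alpha/2+\beta)$ must be compared with $2\sin(\alpha/2)+2\sin(\beta)$-type chord sums. Subadditivity goes the right way, $\sin(a+b)\leq\sin a+\sin b$, only for upper bounds, whereas here I need a lower bound on $\sin$ of a sum in terms of the chords. I would first try to avoid sines altogether by expressing every inner product through the rational coordinates $t_i$ and clearing denominators. If that gets messy, I would fall back on the crude but safe route: establish $2\langle u_L,G\rangle-\|G\|_2^2\geq c\,\|g_{\text{last}}\|_2\|u_{b_s}-u_{a_s-1}\|_2$ with some explicit $c$ using $\cos$ bounds on angles below $\pi/2$. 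That route might force a slightly smaller constant than $\frac{16}{25m^2}$, in which case I would revisit how tightly the step-size bound $\frac{4}{5m}$ is used.
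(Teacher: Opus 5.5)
Your ``moreover'' part, the decreasing case, the rewriting $\|S\|_2^2=1-(2\langle u_L,G\rangle-\|G\|_2^2)$ and the formula $\langle u_L,u_y-u_x\rangle=\|u_y-u_x\|_2\sin(\theta_L-\frac{\theta_x+\theta_y}{2})$ are all correct. However, the displayed sine inequality on which the main case rests is false, not merely hard to make rigorous. Take a single gap of arc $\alpha$ followed by a last block of arc $\beta$, with chords $g$ and $h=u_{b_s}-u_{a_s-1}$. The inequality then reads $2\sin(\frac{\alpha}{2}+\beta)\geq 2\sin\frac{\alpha}{2}+4\sin\frac{\beta}{2}$. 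Since $2\sin(\frac{\alpha}{2}+\beta)-2\sin\frac{\alpha}{2}=4\sin\frac{\beta}{2}\cos\frac{\alpha+\beta}{2}$, this would require $\cos\frac{\alpha+\beta}{2}\geq 1$. In fact $2\langle u_L,G\rangle-\|G\|_2^2=2\|g\|_2\|h\|_2\cos\frac{\alpha+\beta}{2}$ exactly, which is strictly below the $2\|g\|_2\|h\|_2$ your summation claims. Concretely, for $m=3$ and $\sigma=(1,0,1)$ your bound would give $\|S\|_2^2\leq 1-2\|g\|_2\|h\|_2\approx 0.8235$. But $S=u_1-u_2+u_3=(\frac{138}{185},\frac{97}{185})$ has $\|S\|_2^2\approx 0.8314$. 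So the case $\sigma_1=1$ is unproved, and your ``fallback with some explicit $c$'' is exactly the missing argument.

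What closes it, and what the paper does, is to expand in the individual steps $w_i$ rather than in gap chords. For $i\geq 2$, $\|u_i\|_2=\|u_{i-1}\|_2$ gives $2\langle u_{i-1},w_i\rangle=-\|w_i\|_2^2$. Hence $\langle u_L,w_i\rangle=\frac{1}{2}\|w_i\|_2^2+\sum_{i<j\leq L}\langle w_i,w_j\rangle$, and with $Z=\{i\leq L:\sigma_i=0\}$ and $G=\sum_{i\in Z}w_i$,
\[
2\langle u_L,G\rangle-\|G\|_2^2=2\sum_{i\in Z}\ \sum_{j>i,\ \sigma_j=1}\langle w_i,w_j\rangle .
\]
For $2\leq i<j$ the directions of $w_i$ and $w_j$ differ by the angle $\frac{\theta_j+\theta_{j-1}-\theta_i-\theta_{i-1}}{2}<\theta_m=2\arctan\frac{1}{2}<\frac{\pi}{2}$, so every term is nonnegative. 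Since $\sigma_1=1$ and $\sigma$ is not decreasing, there is a pair $(k,k+1)$ with $k\geq 2$, $\sigma_k=0$, $\sigma_{k+1}=1$. Keeping only that term, your chord bound yields $2\langle w_k,w_{k+1}\rangle\geq 2\delta_m\cos(\arctan\frac{1}{2})=\frac{4}{\sqrt{5}}\delta_m\geq\delta_m$. (The paper gets $\frac{128}{125m^2}$ in the $t_i$-coordinates.) So there is no loss in the constant. For $\sigma_1=0$, note that ``at most $\theta_m<1$'' is not the required bound; you need $1-\delta_m$. Your arc-length estimate does give it: $\|S\|_2\leq\theta_m-\theta_1$, and $(\theta_m-\theta_1)^2<0.86\leq 1-\delta_m$ for $m\geq 3$ (and it is $\approx 0.19$ for $m=2$). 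The paper instead uses positivity of $w_i^\top w_j$ for $i,j\geq 2$ to get $\|S\|_2^2\leq\|u_m-u_1\|_2^2\leq\frac{4}{5}$.
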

\begin{proof}
The statement is trivial for $m=1$, and the second claim is trivial for $\sigma=0$.
Let $\sigma\in\{0,1\}^m$ be nonzero and decreasing.
Then there is a $k\in[m]$ such that $\sigma_i=1$ for all $i\leq k$ and $\sigma_i=0$ for all $i>k$.
The sum telescopes, and hence
\[
\left\|\sum_{i=1}^m\sigma_iw_i\right\|_2^2
=
\left\|\sum_{i=1}^k(u_i-u_{i-1})\right\|_2^2
=
\left\|u_k\right\|_2^2
=1,
\]
which proves the first statement.
For all $i,j\in [m],\, i\neq j$, a direct calculation gives
\begin{equation}\label{eq:gadgetcalc}
u_j-u_i=\dfrac{2(t_j-t_i)}{(1+t_j^2)(1+t_i^2)}
\left(
-(t_j+t_i),
1-t_jt_i
\right)
\quad
\text{and}
\quad
\|u_j-u_i\|_2^2
=\dfrac{4(t_j-t_i)^2}{(1+t_j^2)(1+t_i^2)}.
\end{equation}
We have $0<t_i\leq \frac{1}{2}$ for all $i\in [m]$.
Thus, for all $2\leq i<j\leq m$, we have
\[
((t_j+t_{j-1})(t_i+t_{i-1})+(1-t_jt_{j-1})(1-t_it_{i-1}))>0
\quad\implies\quad
w_i^\top w_j>0.
\]
For all $i\in\{2,\dots,m-1\}$, we have
\[
((t_{i+1}+t_i)(t_i+t_{i-1})+(1-t_{i+1}t_i)(1-t_it_{i-1}))=(1+t_i^2)(1+t_{i+1}t_{i-1}).
\]
Together with $t_i-t_{i-1}=1/(2m)$, we obtain
\begin{align*}
w_i^\top w_{i+1}
=\dfrac{4\frac{1}{4m^2}(1+t_{i+1}t_{i-1})}{(1+t_{i+1}^2)(1+t_i^2)(1+t_{i-1}^2)}
\geq \dfrac{1}{m^2(5/4)^3}
=\dfrac{64}{125m^2}.
\end{align*}
Now let $m\geq 2$ and let $\sigma\in \{0,1\}^m$ be not decreasing.
Set $x=\sum_{i=1}^m\sigma_iw_i$.
First suppose that $\sigma_1=1$.
Then, there is an index $k\in \{2,\dots,m-1\}$ with $\sigma_k=0$ and $\sigma_{k+1}=1$.
For every $j\geq 2$, the identity $\|u_j\|_2^2=\|u_{j-1}\|_2^2$ gives
\[
0=\|u_j\|_2^2-\|u_{j-1}\|_2^2=\left\|\sum_{i=1}^jw_i\right\|_2^2-\left\|\sum_{i=1}^{j-1}w_i\right\|_2^2
=\|w_j\|_2^2+2w_1^\top w_j+2\sum_{i=2}^{j-1}w_i^\top w_j.
\]
Inserting this equality and using $\|w_1\|_2^2=1$ and $w_i^\top w_j>0$ for all $2\leq i<j\leq m$, we get
\begin{align*}
\|x\|_2^2
&=1+\sum_{j=2}^m\sigma_j\left(\|w_j\|_2^2+2w_1^\top w_j+2\sum_{i=2}^{j-1}\sigma_iw_i^\top w_j\right)\\
&=1+\sum_{j=2}^m\sigma_j\left(\smash[b]{\|w_j\|_2^2+2w_1^\top w_j+2\sum_{i=2}^{j-1}w_i^\top w_j}-2\sum_{i=2}^{j-1}(1-\sigma_i)w_i^\top w_j\right)\\
&=1-2\sum_{j=2}^m\sigma_j\left(\sum_{i=2}^{j-1}(1-\sigma_i)\smash[b]{w_i^\top w_j}\right)\\
&\leq 1-2w_k^\top w_{k+1}\leq 1-\dfrac{128}{125m^2}\leq 1-\dfrac{16}{25m^2}.
\end{align*}
Now suppose that $\sigma_1=0$.
Then, using that $w_i^\top w_j\geq 0$ holds for all $i,j\in \{2,\dots,m\}$, we have
\[
\|x\|_2^2
\leq \sum_{i,j\in \{2,\dots,m\}}\sigma_i\sigma_jw_i^\top w_j
\leq \sum_{i,j\in \{2,\dots,m\}}w_i^\top w_j
= \left\|\sum_{i=2}^mw_i\right\|_2^2
= \|u_m-u_1\|_2^2.
\]
Using the second formula in \eqref{eq:gadgetcalc} and $m\geq 2$, we obtain
\[
\|u_m-u_1\|_2^2=\dfrac{4(t_m-t_1)^2}{(1+t_m^2)(1+t_1^2)}
\leq \dfrac{4(1/2)^2}{5/4}
=\dfrac{4}{5}
\leq 1-\dfrac{16}{25m^2},
\]
since $m\geq 2$.
This proves the second claim.

Finally, for $i,j\in [m],\,i\neq j$, we have $|t_i-t_j|\geq 1/(2m)$ and $t_i,t_j\leq 1/2$.
Thus, by \eqref{eq:gadgetcalc},
\[
\|u_i-u_j\|_2^2
=\dfrac{4(t_i-t_j)^2}{(1+t_i^2)(1+t_j^2)}
\geq \dfrac{4\frac{1}{4m^2}}{(5/4)^2}
=\dfrac{16}{25m^2}
=\delta_m.
\qedhere
\]
\end{proof}
Every binary vector $\sigma$ that leads to a point with high $L_2^2$-norm is nonzero and decreasing and corresponds to a unique index $k\in [m]$ with $\sigma_i=1$ for $i\leq k$ and $\sigma_i=0$ for $i>k$.
Later in the reduction, we will use this property of the gadget to encode the choice of one of $m$ objects into binary vectors.

\Cref{lem:gadget} can also be interpreted as a statement about a zonotope with generators $w_1,\dots, w_m$ (see \Cref{fig:gadget}).
The corresponding statement is that precisely the nonzero decreasing binary vectors correspond to vertices of the zonotope that lie on the unit circle, and all other binary vectors yield points in the interior of the zonotope with $L_2^2$-norm at most $1-\delta_m$.

A similar gadget can be constructed for the $L_p$-norms with $p\in (1,\infty)$ by choosing suitable points that lie (approximately) on the $L_p$-unit circle.
The curvature of the $L_p$-unit circle for $p\in (1,\infty)$ again implies that not decreasing binary vectors lead to points that lie some distance away from the $L_p$-unit circle.
This provides a geometric intuition why a gadget cannot be constructed for $p=1$ and $p=\infty$: in these cases, the unit circles consist of polyhedral faces and have no curvature.
As discussed earlier, this property allows for algorithms for \textsc{$L_p^p$-Max on Zonotopes} that run in FPT time with respect to $d$ for $p=1$ and in polynomial time for $p=\infty$.
While it is easy to choose suitable points that lie exactly on the $L_p$-unit circle when allowing irrational points, the same is not always possible for rational points, since already for $p=3$, the only nonnegative rational solutions to $x^3+y^3=1$ are $(0,1)$ and $(1,0)$, as other points would contradict Fermat's last theorem.
One could instead construct an approximate version of the gadget by choosing vectors that lie only approximately on the $L_p$-unit circle, similar to \citep{KKW15}, who first choose potentially irrational points on the $L_p$-unit circle and then show that rounding these points to a sufficiently fine grid preserves the relevant structure of their reduction.
Instead, our transfer from $L_2^2$- to $L_p^p$-maximization for $p\in(1,\infty)$ in \Cref{sec:Lphardness} avoids explicitly adjusting the $L_2$-gadget using another approach.

Before we describe the reduction, we give a high-level overview of the construction.

\textbf{Proof idea.}
For an instance $(G,k)$ of \textsc{Multicolored Clique}, we construct a generator matrix $A_G$ with one generator per node, one generator per edge, and one extra generator.
A binary vector $\sigma\in\{0,1\}^{n+m+1}$ with $x_\sigma=A_G \sigma$ then ``chooses'' a subset of edges, where an edge is chosen if $\sigma$ has a $1$-entry in the corresponding entry.
Similarly, using the gadget in \Cref{lem:gadget}, every $\sigma$ with high $\|x_\sigma\|_2^2$ value ``chooses'' exactly one node for every color.
Relative to the chosen nodes, choosing an edge whose endpoints are both chosen nodes then increases $\|x_{\sigma}\|_2^2$, whereas choosing an edge where at least one endpoint is not a chosen node decreases $\|x_{\sigma}\|_2^2$.
Thus, once the choice of nodes is fixed, the choice of edges that leads to a maximum $\|x_{\sigma}\|_2^2$ value corresponds to choosing exactly the edges whose endpoints are both chosen nodes.
Hence, if the chosen nodes form a $k$-colored clique, then $\binom{k}{2}$ edges are chosen, which leads to $\|x_{\sigma}\|_2^2$ reaching a threshold value.
If the chosen nodes do not form a $k$-colored clique, at most $\binom{k}{2}-1$ edges are chosen, which leads to $\|x_{\sigma}\|_2^2$ not reaching the threshold.
\begin{theorem}\label{thm:l2norm_hardness}
	\textsc{$L_2^2$-Max on Zonotopes} is W[1]-hard with respect to $d$ and not solvable in $\rho(d) N^{o(d)}$ time (where $N$ is the input bit-length) for any computable function $\rho$ assuming the ETH.
\end{theorem}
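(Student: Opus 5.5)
We reduce from \textsc{Multicolored Clique}, following the proof idea above. The target dimension is $d=2k+1$: two coordinates for each color class, plus one extra coordinate $e_0$.

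\textbf{Construction.} Let $m=\max_i|V_i|$, and fix an ordering $v_{i,1},\dots,v_{i,m_i}$ of each $V_i$. Take the rational vectors $u_j$ and $w_j$ from \Cref{lem:gadget} for this $m$, and write $u^{(i)}_j$ and $w^{(i)}_j$ for their embeddings into the coordinate pair $(2i-1,2i)$. The generators are as follows.
\begin{itemize}
\item For each node $v_{i,j}$, the generator $M w^{(i)}_j$.
\item For each edge $e=\{v_{i,a},v_{j,b}\}$ with $i\neq j$, the generator $g_e=c\,(u^{(i)}_a+u^{(j)}_b)-h\,e_0$.
\item One extra generator $H e_0$.
\end{itemize}
The rationals $M\gg H\gg c,h>0$ are chosen below, all of polynomial bit-length in $|V|$.

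\textbf{Expanding the objective.} Consider $\sigma$ whose node part is nonzero and decreasing in every color block, say selecting index $a_i$ in block $i$, and which selects the extra generator. Let $S$ be the set of selected edges and $x=\sum\sigma_\ell g_\ell$. By \Cref{lem:gadget} the node part has squared norm $kM^2$, and expanding gives
\[
\|x\|_2^2 = kM^2+H^2+\sum_{e\in S}\Bigl(2cM\bigl(u_{a_i}^\top u_{a}+u_{a_j}^\top u_{b}\bigr)-2Hh\Bigr)+R,
\]
where $R$ collects the edge--edge and edge-squared terms, so $|R|\le |E|^2(4c^2+h^2)$.

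\textbf{Choosing parameters.} Since $u_{a'}^\top u_a=1-\tfrac12\|u_a-u_{a'}\|_2^2$, the last part of \Cref{lem:gadget} gives $u_{a'}^\top u_a\le 1-\delta_m/2$ whenever $a\neq a'$. Choose $Hh=cM(2-\delta_m/4)$. Then an edge with both endpoints selected contributes exactly $cM\delta_m/2$, and any other edge contributes at most $-cM\delta_m/2$. Now pick $c,h$ small enough, relative to $M$, that $|R|\ll cM\delta_m/2$. Then the best such $\sigma$ exceeds
\[
L=kM^2+H^2+\binom k2 cM\delta_m/2-\tfrac14 cM\delta_m
\]
if the selected nodes form a $k$-colored clique. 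Otherwise at most $\binom k2-1$ edges contribute positively, so $\|x\|_2^2$ stays below $L$ by a gap $\Delta$ of order $cM\delta_m$.

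\textbf{The remaining cases.} These are the main obstacle: we must show they cannot beat $L$.
\begin{itemize}
\item \emph{Some block is zero or not decreasing.} By \Cref{lem:gadget}, the node part loses at least $M^2\delta_m$. The cross terms between nodes and edges are bounded by $O(|E|cM)$, and the $e_0$-terms are bounded by $H^2+O(|E|Hh)$. So it suffices to take $M\delta_m\gg |E|c$ and $M^2\delta_m\gg |E|Hh$. Since $Hh\approx 2cM$, this amounts to $M\gg |E|c/\delta_m$.
\item \emph{The extra generator is not selected.} The $e_0$-coordinate is then $-h|S|$, so we lose $H^2-|S|^2h^2$ compared to $L$. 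We need $H^2$ to dominate $O(|E|cM)$.
\end{itemize}
Concretely I would take $c=1$, $M=|E|^3m^2$ and $h$ a suitably small rational, with $H=M(2-\delta_m/4)/h$. Then I would verify the chain of inequalities carefully, each of which is routine. All numbers are rational, because $u_j$ is rational and $\delta_m=16/(25m^2)$.

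\textbf{Conclusion.} The instance has $d=2k+1$ and is built in polynomial time. This gives a parameterized reduction, so \textsc{$L_2^2$-Max on Zonotopes} is W[1]-hard with respect to $d$. Moreover, a $\rho(d)N^{o(d)}$ algorithm would solve \textsc{Multicolored Clique} in $\rho'(k)|V|^{o(k)}$ time, since $N=\poly(|V|)$. This contradicts the ETH.
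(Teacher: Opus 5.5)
Your proposal is correct and is essentially the paper's reduction. The paper uses the same gadget-scaled node generators, edge generators $b_e-e_0$ (your $c=h=1$), and an extra generator $Te_0$ with $T=2\Lambda-\delta\Lambda/4$, which is exactly your tuning $Hh=cM(2-\delta_m/4)$, with the same three-case analysis; the only difference is that you bound the quadratic remainder $R$ directly where the paper uses an edge-exchange argument. When you do the routine verification, note that with $c=1$ the bound $R\le(2+h^2)|E|^2$ does not shrink as $h\to 0$, so your concrete $M=|E|^3m^2$ beats the gap $cM\delta_m/4=4|E|^3/25$ only once $|E|$ exceeds a small constant (a three-color no-instance with three edges already gives a violating $\sigma$); take, e.g., $M=100|E|^3m^2$ instead.
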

\begin{proof}
	We reduce from \textsc{Multicolored Clique}.
	Let $(G=(V=V_1\dot\cup\cdots\dot\cup V_k, E),k)$ be an instance of \textsc{Multicolored Clique} with $k\geq 2$.
    We write $V_i=\{v_{i,1},\dots,v_{i,n_i}\}$, $n= |V|$, and $m=|E|$.
    We assume that $G$ has no edges within a color class and has at least one edge between every pair of color classes ($m\geq \binom{k}{2}$); otherwise, the instance is a no-instance.
	We construct a matrix $A_G$ and~a threshold $L$ such that $\max_{z\in Z(A_G)} \norm{z}_2^2> L$ if $G$ is a yes-instance and $\max_{z\in Z(A_G)} \norm{z}_2^2<L-m^2$ otherwise.
	
	\textbf{Construction.}
    For each color $i\in[k]$, let $c_{i,0}=u^{(i)}_0$ and $c_{i,p}=u^{(i)}_p,\,p\in[n_i]$ be the $n_i+1$ points from \Cref{lem:gadget}.
	For $p,q\in[n_i],\,p\neq q$,
	\[
	\|c_{i,p}-c_{i,q}\|_2^2\geq\frac{16}{25n_i^2}\geq\frac{16}{25n^2}\eqcolon \delta.
	\]
	We write $\R^{2k+1}$ as a Cartesian product of two-dimensional spaces and a single one-dimensional space
	\[
	\R^{2k+1}=S_0\times S_1\times \dots \times S_k,
	\quad S_0\cong \R,\, S_i\cong \R^2 \text{ for }i\in[k].
	\]
	Let $\pi_i$ denote the projection to $S_i$ and let $e_0$ be the standard basis vector corresponding to $S_0$.
	
	Let $\Lambda>0$ and $T>0$ be constants whose roles and values are specified below.
	For every node $v_{i,p}\in V_i$, define a generator $a_{i,p}\in \Q^{2k+1}$ with zero entries everywhere but
	\[
	\pi_i(a_{i,p})=\Lambda(c_{i,p}-c_{i,p-1}).
	\]
	For every edge $e=\{v_{i,p},v_{j,q}\}\in E,\, i<j$, define $b_e=b_{i,p,j,q}\in \Q^{2k+1}$ with zero entries everywhere but
	\[
	\pi_i(b_{i,p,j,q})=c_{i,p},
	\quad
	\pi_j(b_{i,p,j,q})=c_{j,q},
	\]
	and define the generator $a_e=a_{i,p,j,q}\in \Q^{2k+1}$ as $a_e=b_e-e_0$.
	Finally, define a generator $a_0=Te_0$.
	The generator matrix $A_G\in\Q^{(2k+1)\times (1+n+m)}$ has as columns $a_0$, all node generators $a_{i,p}$, and all edge generators $a_{i,p,j,q}$.
	
	\textbf{Value of a binary vector.}
    Let $\sigma\in \{0,1\}^{n+m+1}$ be a binary vector that selects a subset of columns of $A_G$.
	Write $\sigma_0$ for the entry of $a_0$ and $\sigma^{(i)}\in\{0,1\}^{n_i}$ for the entries of the node generators of color $i$.
	Let $F\subseteq E$ be the set of edges with 1-entries in $\sigma$ and set $r= |F|$.
	Define
    \[
    s\coloneq \sum_{i=1}^{k}\sum_{p=1}^{n_i}\sigma^{(i)}_p a_{i,p},
    \qquad
    Q_F\coloneq \sum_{e\in F}b_e.
    \]
    Then
	\[
	x=A_G \sigma = s + Q_F + (\sigma_0 T-r)e_0
	\]
	which implies, by the orthogonality of $S_0$ and $S_1\times \dots \times S_k$, that
	\begin{equation}\label{eq:x_norm_expansion}
		\|x\|_2^2
		=\|s + Q_F\|_2^2+(\sigma_0 T-r)^2
        =\|s\|_2^2+\sigma_0 T^2+\Phi(F),
	\end{equation}
    where
    \[\Phi(F)\coloneq\|Q_F\|_2^2+r^2+2s^\top Q_F -2\sigma_0Tr\]
    is the contribution of the chosen edges to $\|x\|_2^2$.
    For $\|x\|_2^2>L$ to hold, we will show that every individual component $\|s\|_2^2$, $\sigma_0 T^2$, and $\Phi(F)$ must be large.
    
    \textbf{Choice of nodes.}
	First, by \Cref{lem:gadget}, we have
	\[
	\|s\|_2^2
	=\sum_{i=1}^{k}\left\|\pi_i(s)\right\|_2^2
	=\sum_{i=1}^{k}\Lambda^2\left\|\sum_{p=1}^{n_i}\sigma^{(i)}_p (c_{i,p}-c_{i,p-1})\right\|_2^2
	\leq k\Lambda^2,
	\]
    and equality holds exactly when every $\sigma^{(i)}$ is nonzero and decreasing.
    In that case, there is a $p_i\in [n_i]$ with $\sigma^{(i)}_p=1$ exactly for $p\leq p_i$ for every $i\in [k]$ and
	\[
	\pi_i(s)=\sum_{p=1}^{p_i}\Lambda\left(c_{i,p}-c_{i,p-1}\right)=\Lambda c_{i,p_i} \text{ for }i\in [k]
	\quad
	\implies
	\quad
	\|s\|_2^2=k\Lambda^2.
	\]
    Thus, if equality holds, $\sigma$ encodes the choice of exactly one node $v_{i,p_i}$ for every color $i\in[k]$.
    
	\textbf{Choice of edges.} Every edge vector $b_e$ satisfies $\|b_e\|_2^2=2$.
	By the triangle inequality, $\|Q_F\|_2\leq \sum_{e\in F}\|b_e\|_2=r\sqrt{2}$ and
	\[
	\|Q_F\|_2^2\leq 2r^2.
	\]
	Further, for every edge $e=\{v_{i,p},v_{j,q}\}\in F$, we have, using $\|\pi_i(s)\|_2\leq \Lambda$ for every $i\in [k]$,
	\[
	s^\top b_e
	=\pi_i(s)^\top c_{i,p}+\pi_j(s)^\top c_{j,q}
	\leq \|\pi_i(s)\|_2 \|c_{i,p}\|_2+\|\pi_j(s)\|_2 \|c_{j,q}\|_2
	\leq2\Lambda
	\quad
	\implies
	\quad
	2s^\top Q_F\leq 4\Lambda r.
	\]
    Observe that if $\sigma$ encodes the choice of exactly one node $v_{i,p_i}$ for every color $i\in [k]$, then we have $\pi_i(s)=\Lambda c_{i,p_i}$ and $\pi_j(s)=\Lambda c_{j,p_j}$.
    In this case, $s^\top b_e=2\Lambda$ holds with equality exactly when both endpoints of the edge $e$ are chosen nodes, that is, we have $p=p_i$ and $q=p_j$ where $e=\{v_{i,p},v_{j,q}\}$.
    
    \textbf{How a clique maximizes the norm.} We aim to choose $L$, $T$, and $\Lambda$ in such a way that $\sigma_0 =0$ implies $\|x\|_2^2<L-m^2$ (due to $\sigma_0 T^2=0$), which then allows us to restrict to the case where $\sigma_0=1$.
    We have already seen that $\|s\|_2^2$ is maximal if and only if $\sigma$ corresponds to selecting one node $v_{i,p_i}$ for every color $i\in [k]$.
    On the other hand, if $\sigma$ does not correspond to a selection of nodes, then $\|x\|_2^2<L-m^2$.
    If these criteria are met, then the norm $\|x\|_2^2$ is maximized by making $s^\top b_e$ tight for all the selected edges, that is, by choosing all the edges between selected nodes.
    Only if the selected nodes form a clique, enough edges are chosen and $\|x\|_2^2$ exceeds $L$.
    Choosing an edge $e$ influences the $\Phi(F)$ part of $\|x\|_2^2$ as follows: every edge subtracts $1$ from the $e_0$ coordinate and therefore decreases $(T-r)^2$, which leads to a ``cost'' of selecting the edge; it also increases $\|Q_F\|_2^2$ and $2s^\top Q_F$, which leads to a ``gain'' of selecting the edge.
    The additional contribution of the edge $e$ to $2s^\top Q_F$ is $2s^\top b_e$, which is maximal if the edge lies between selected nodes.
    We will therefore choose the constants $T$ and $\Lambda$ such that the gain is larger than the cost of selecting an edge if and only if the edge lies between selected nodes.
    Now, we show that one can indeed choose the constants $L$, $T$, and $\Lambda$ in such a way to reproduce this behavior.
	
	\textbf{Case 1: $\sigma_0=0$.}
	We use \eqref{eq:x_norm_expansion} together with our derived bounds to obtain
	\[
	\|x\|_2^2
	=\|s\|_2^2  + \|Q_F\|_2^2+r^2+ 2s^\top Q_F
	\leq k\Lambda^2+4\Lambda m+3m^2.
	\]
	Thus, $\|x\|_2^2<L-m^2$ is implied by
	\begin{equation}\label{ieq:case1}
		L-k\Lambda^2-4\Lambda m-4m^2>0.
	\end{equation}
	\textbf{Case 2: $\sigma_0=1$ and some $\sigma^{(i)}$ is zero or not decreasing.}
	In this case, \Cref{lem:gadget} gives $\|s\|_2^2\leq k\Lambda^2-\delta\Lambda^2$.
	Using \eqref{eq:x_norm_expansion} together with our derived bounds, we obtain
	\[
	\|x\|_2^2
	\leq\|s\|_2^2 + T^2 + \|Q_F\|_2^2+r^2+2s^\top Q_F
	\leq k\Lambda^2-\delta\Lambda^2+T^2+3m^2+4\Lambda m.
	\]
	Thus, $\|x\|_2^2<L-m^2$ is implied by
	\begin{equation}\label{ieq:case2}
	L-k\Lambda^2+\delta\Lambda^2-4\Lambda m-4m^2-T^2>0.
	\end{equation}
	\textbf{Case 3: $\sigma_0=1$ and $\sigma^{(i)}$ is nonzero and decreasing for all $i\in [k]$.}
	For every color $i\in[k]$, let $v_{i,p_i}$ be the selected node.
    As discussed earlier, we then have $\pi_i(s)=\Lambda c_{i,p_i}$ for all $i\in [k]$ and $\|s\|_2^2=k\Lambda^2$.
    
	We call an edge $e=\{v_{i,p},v_{j,q}\}$ \emph{matching} if its endpoints are both selected nodes, that is, if $p=p_i$ and $q=p_j$.
	For every edge $e=\{v_{i,p},v_{j,q}\}$, we have
	\begin{equation}\label{eq:matching_vs_nonmatching}
	s^\top b_e
	=\Lambda c_{i,p_i}^\top c_{i,p}+\Lambda c_{j,p_j}^\top c_{j,q}
	\begin{cases}
		=2\Lambda & \text{ if $e$ is matching,}\\
		\leq \Lambda(2-\delta/2)  & \text{ if $e$ is not matching,}
	\end{cases}
	\end{equation}
	where the equality follows from $\|c_{i,p_i}\|_2=\|c_{j,p_j}\|_2=1$, and the inequality follows by \Cref{lem:gadget}, since we have $\min_{p\neq q}\|c_{i,p}-c_{i,q}\|_2^2\geq \delta$ for all $i\in [k]$ and thus, for $p\neq q$,
	\[
	c_{i,p}^\top c_{i,q}
	=\frac{1}{2}\left(\|c_{i,p}\|_2^2+\|c_{i,q}\|_2^2-\|c_{i,p}- c_{i,q}\|_2^2\right)
	=1-\frac{1}{2}\|c_{i,p}- c_{i,q}\|_2^2
	\leq 1-\frac \delta2.
	\]
	For every edge $e$, define $\ell_e\coloneq 2s^\top b_e-2T$.
	Since $r=|F|$, we have $2s^\top Q_F-2Tr=\sum_{e\in F} (2s^\top b_e-2T)=\sum_{e\in F}\ell_e$.
    Therefore, we can rewrite
	\[
	\|x\|_2^2
	=k\Lambda^2+T^2 + \sum_{e\in F}\ell_e + \|Q_F\|_2^2+r^2.
	\]
	For a fixed choice of nodes, $\|x\|_2^2$ has maximum value when $\Phi(F)= \sum_{e\in F}\ell_e + \|Q_F\|_2^2+r^2$ has maximum value for all subsets $F\subseteq E$.
	We will now derive conditions on $T$ and $\Lambda$ that imply, when satisfied, that $\Phi(F)$ is maximized exactly when $F$ is the set of all matching edges.
	
	First, suppose that $F$ contains a nonmatching edge $e$ and set $F'=F\setminus\{e\}$.
	Then, since $|F'|=r-1$ and $Q_F=Q_{F'}+b_e$, we have
	\[
	\Phi(F)-\Phi(F')
	=\ell_e+(\|Q_{F'}+b_e\|_2^2-\|Q_{F'}\|_2^2)+(r^2-(r-1)^2)
	=\ell_e+2Q_{F'}^\top b_e+\|b_e\|_2^2+2r-1.
	\]
	For any two edges $e,f\in E$, we have $b_f^\top b_e\leq 2$, since their support overlaps in at most two $S_i$-blocks, and in each common block the inner product of the two unit vectors is at most one.
	Therefore, $Q_{F'}^\top b_e=\sum_{f\in F'}b_f^\top b_e\leq 2(r-1)$.
	Together with $\|b_e\|_2^2=2$ and \eqref{eq:matching_vs_nonmatching}, we obtain 
	\[
	\Phi(F)-\Phi(F')
	\leq \ell_e+4(r-1)+2+2r-1
	= 2s^\top b_e-2T+6r-3
	\leq 4\Lambda -\delta\Lambda -2T+6m-3.
	\]
	Thus, deleting a nonmatching edge from $F$ strictly increases $\Phi$ if
	\begin{equation}\label{ieq:case3_nonmatching}
		-4\Lambda +\delta\Lambda +2T-6m+3>0.
	\end{equation}
	In particular, if \eqref{ieq:case3_nonmatching} is satisfied, then no maximizing set $F$ can contain a nonmatching edge.
	
	Now suppose that $F$ contains only matching edges, and let $e$ be a matching edge not in $F$.
	Then
	\[
	\Phi(F\cup \{e\})-\Phi(F)
	= \ell_e+(\|Q_{F}+b_e\|_2^2-\|Q_{F}\|_2^2)+((r+1)^2-r^2)
	= \ell_e+2Q_{F}^\top b_e+\|b_e\|_2^2+2r+1.
	\]
	For any two edges $e,f\in E$, we have $b_f^\top b_e\geq -2$ and thus $Q_{F}^\top b_e=\sum_{f\in F}b_f^\top b_e\geq -2r$.
	Together with $\|b_e\|_2^2=2$ and \eqref{eq:matching_vs_nonmatching}, we obtain 
	\[
	\Phi(F\cup \{e\})-\Phi(F)
	\geq \ell_e-4r+2+2r+1
	= 2s^\top b_e-2T-2r+3
	\geq 4\Lambda -2T-2m+3.
	\]
	Thus, adding a missing matching edge to $F$ strictly increases $\Phi$ if
	\begin{equation}\label{ieq:case3_missingmatching}
		4\Lambda -2T-2m+3>0.
	\end{equation}
	Thus, if \eqref{ieq:case3_nonmatching} and \eqref{ieq:case3_missingmatching} hold, then the unique maximizing set $F$ is exactly the set of all matching edges.
	
	\textbf{Choice of constants.}
    We are now ready to define $\Lambda$, $T$, and $L$.
	We set
	\[
	\Lambda\coloneq\frac{100(m+1)^2}{\delta^2},
	\quad
	T\coloneq2\Lambda-\frac{\delta \Lambda}{4},
	\quad
	L\coloneq k\Lambda^2+T^2+\binom{k}{2}\frac{\delta\Lambda}{2}.
	\]
	We have $0<\delta<1$, $1<\Lambda<T$, $4m+4m^2<\Lambda$, and $100(m+1)^2\Lambda<\delta\Lambda^2$.
	Next, we verify that \Cref{ieq:case1,ieq:case2,ieq:case3_missingmatching,ieq:case3_nonmatching} are satisfied.
	\Cref{ieq:case1} is satisfied because
	\begin{align*}
		L-k\Lambda^2-4\Lambda m-4m^2
		>\Lambda^2-4\Lambda m-4m^2
		>\Lambda(4m+4m^2)-4\Lambda m-4m^2
		>4m^2(\Lambda-1)
		>0.
	\end{align*}
	\Cref{ieq:case2} is satisfied since
	\begin{align*}
		L-k\Lambda^2+\delta\Lambda^2-4\Lambda m-4m^2-T^2
		>\delta\Lambda^2-4\Lambda m-4m^2
		>100(m+1)^2\Lambda-4\Lambda m-4m^2
		>0.
	\end{align*}
	For \Cref{ieq:case3_missingmatching,ieq:case3_nonmatching}, we use $4\Lambda -2T=\delta\Lambda/2>50(m+1)^2$.
	\Cref{ieq:case3_nonmatching} is satisfied since
	\[
	-4\Lambda +\delta\Lambda +2T-6m+3
	=\delta\Lambda/2-6m+3
	>50(m+1)^2-6m
	>0.
	\]
	\Cref{ieq:case3_missingmatching} is satisfied since
	\[
	4\Lambda -2T-2m+3
	=\delta\Lambda/2-2m+3
	>50(m+1)^2-2m
	>0.
	\]
	
	\textbf{Correctness.}
    Suppose that $G$ contains a $k$-colored clique $\{v_{1,p_1},\dots, v_{k,p_k}\}$.
	Then, we can set $\sigma_0=1$, set $\sigma^{(i)}$ such that the node $v_{i,p_i}$ is chosen for every $i\in[k]$, and set the $\sigma$-entry for every edge of the clique to $1$.
	Let $F\subseteq E$ be the set of these edges.
	We have $|F|=r=\binom{k}{2}$.
	For every $e\in F$, \Cref{eq:matching_vs_nonmatching} gives $s^\top b_e=2\Lambda$ and thus $\ell_e=2s^\top b_e-2T=4\Lambda -2T=\frac{\delta\Lambda}{2}$.
	Hence,
	\[
	\|x\|_2^2
	=L + \|Q_F\|_2^2+\binom{k}{2}^2
	>L.
	\]
	Now suppose that $G$ has no $k$-colored clique.
	By Cases 1 and 2, it suffices to consider $\sigma\in \{0,1\}^{n+m+1}$ with $\sigma_0=1$ and where $\sigma$ encodes the choice of one node per color.
	Let $F$ be the set of all matching edges induced by the choice of nodes.
	Since $G$ has no $k$-colored clique, we have $|F|=r\leq \binom{k}{2}-1$.
	Using $r\leq m$ and the inequalities $\|Q_F\|_2^2\leq 2m^2$ and $\delta\Lambda/2>50(m+1)^2$, we obtain
	\begin{align*}
		\|x\|_2^2
		&=k\Lambda^2+T^2+r\delta\Lambda/2 + \|Q_F\|_2^2+r^2\\
		&\leq L-\delta\Lambda/2 + 3m^2
		<L-50(m+1)^2+3m^2
		<L-m^2.
	\end{align*}
	By \Cref{lem:l2max_binaryformulation}, $(G,k)$ is a yes-instance if $\max_{z\in Z(A_G)} \norm{z}_2^2> L$ and $\max_{z\in Z(A_G)} \norm{z}_2^2< L-m^2$ otherwise.
	All involved rational numbers have bit-length $\mathcal{O}(\log n)$, so the reduction is polynomial.
	The dimension is $d=2k+1\in \mathcal O(k)$, which proves W[1]-hardness.
    Consequently, an algorithm with running time $\rho(d) N^{o(d)}$ would solve \textsc{Multicolored Clique} in time $\rho(2k+1)n^{o(k)}$ (since $N\leq n^{\mathcal{O}(1)}$), contradicting the ETH.
\end{proof}

\subsection{Non-Explicit Reduction with \texorpdfstring{$2k$}{2k} Dimensions}\label{sec:implicit}
Here, we present a second reduction for \textsc{$L_2^2$-Max on Zonotopes} (in fact it is already a reduction for \textsc{$L_2$-Max on Zonotopes}).
The previous reduction works directly with generators and uses one additional coordinate to control the choice of edges.
The reduction below takes a different viewpoint and instead works with the support function of a centrally symmetric zonotope (in dimension $2k$ instead of $2k+1$) on a finite set of directions where every direction encodes the choice of one node per color.
The generators encoding the edges are then non-explicitly constructed by solving a polynomial-size linear program.

\textbf{Symmetric zonotopes.}
For a matrix $G=(g_1,\dots, g_n)\in \R^{d\times n}$, we define the centrally symmetric zonotope $Z_{\pm}(G)\coloneq \sum_{i=1}^n\conv\left(\{-g_i,g_i\}\right)=\left\{\sum_{i=1}^n\lambda g_i: \lambda\in [-1,1]^n\right\}$.
We have $Z_{\pm}(G)=Z([G,-G])$.
The support function of a symmetric zonotope $Z_{\pm}(G)$ is $h_{Z_{\pm}(G)}:\R^d\to \R,\; x\mapsto \sum_{i=1}^n|g_i^\top x|$ and
\[
\max_{z\in Z_{\pm}(G)}\|z\|_2
=
\max_{z\in Z_{\pm}(G)}\max_{\|y\|_2=1} |y^\top z|
=
\max_{\|y\|_2=1}\max_{z\in Z_{\pm}(G)}|y^\top z|
=
\max_{\|y\|_2=1}h_{Z_{\pm}(G)}(y).
\]
Throughout this subsection, let
\[
Y=Y_1\times \dots \times Y_k,
\qquad
Y_i\cong \R^2
\text{ for } i\in [k],
\]
so $Y\cong \R^{2k}$.
Let $\rho_i:\R^2\to Y$ denote the embedding into $Y_i$ with $\rho_i(x)=(\mathbf{0}_{2i-2},x,\mathbf{0}_{2k-2i})$.
Similarly, for $1\leq i<j\leq k$, let $\rho_{ij}:\R^2\times \R^2\to Y$ denote the embedding into $Y_i$ and $Y_j$ with $\rho_{ij}(x,y)=(\mathbf{0}_{2i-2},x,\mathbf{0}_{2j-2i-2},y,\mathbf{0}_{2k-2j})$.
For a nonempty finite set $T\subseteq Y$, denote $\operatorname{dist}(y,T)=\min_{t\in T}\|y-t\|_2$.

\textbf{Proof idea.}
For an instance $(G,k)$ of \textsc{Multicolored Clique}, we construct the support function $F:Y\to \R,\;y\mapsto\sum_{t=1}^N | g_t^\top y|$ of a $2k$-dimensional symmetric zonotope.
In \Cref{lem:node_selection}, we first construct a node term $V$ whose maximum over the $L_2$-unit sphere is attained exactly at a finite set $\mathcal{T}\subseteq Y$, where every point of $\mathcal{T}$ encodes the choice of one node per color class.
More precisely, $V(y)=1-\frac{1}{2}\dist(y,\mathcal{T})^2$, so moving away from $\mathcal{T}$ causes a quadratic loss.
In \Cref{prop:second_reduction_result}, for every pair of colors, we then use an interpolation result (\Cref{lem:interpolation}) to construct a zonotope support function that gives an additional gain exactly when the two selected nodes are adjacent.
Summing over all pairs of colors gives an edge term $E$ that has maximal value $E(t)=E_{\rm cl}$ if the selected nodes from $t\in \mathcal{T}$ form a $k$-colored clique, and $E(t)\leq E_{\rm cl}-\delta_*$ otherwise.
Finally, we set $F=V+\eta E$ for a sufficiently small $\eta>0$.
Since $E$ is $\Lambda$-Lipschitz with respect to the $L_2$-norm for a $\Lambda>0$, moving a distance $D$ away from $\mathcal{T}$ can improve the edge term only linearly in $D$ by at most $\eta \Lambda D$, whereas the node term decreases quadratically by $D^2/2$.
Thus, for a sufficiently small $\eta$, the maximum of $F$ over the $L_2$-unit sphere distinguishes whether the selected nodes form a clique.
Since $F$ is the support function of a symmetric zonotope, maximizing $F$ over the $L_2$-unit sphere is equivalent to maximizing the $L_2$-norm over this zonotope.

\subsubsection{Interpolation}
Before proving our interpolation result, we first prove a structural result about a cone that we then use to prove the interpolation result. Let $r\ge 1$ be fixed and let $S=\{s_1,\ldots,s_M\}\subset \mathbb Q^r\setminus \{0\}$ be pairwise linearly independent. For a vector $w\in \R^r$, denote by $\Phi_S(w)\coloneq (|w^\top s_1|,\dots, |w^\top s_M|)\in \R^M$ the map that maps the set $S$ to their values of the support function of the zonotope $\conv(\{-w,w\})$.
The pointed cone \[C(S)\coloneq\cone\{\Phi_S(w): w\in \R^r\}
\]
then corresponds precisely to all values that a support function of a symmetric zonotope can attain on the vectors in $S$.
We prove the following structural result about this cone.
\begin{lemma}\label{lem:cone_structure}
    Fix $r\ge 1$.
    Let $S=\{s_1,\ldots,s_M\}\subset \mathbb Q^r\setminus \{0\}$ be pairwise linearly independent, with all points having equal Euclidean norm.
    Then, the cone $C(S)$ is finitely generated and contains the all-one vector in its interior, that is, $\mathbf{1}\in \operatorname{int} C(S)$.
    Moreover, we can compute the set of extreme rays of $C(S)$ in polynomial time.
\end{lemma}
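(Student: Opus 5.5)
The plan is to exploit that $w\mapsto \Phi_S(w)$ is piecewise linear with respect to the central hyperplane arrangement $\mathcal{H}_S=\{s_i^\perp: i\in[M]\}$ in $\R^r$. I will first prove finite generation together with the algorithm, and then prove $\mathbf 1\in\operatorname{int} C(S)$ with an averaging argument and a kink argument.

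\textbf{Finite generation and computability.} On the closure $K$ of each (full-dimensional) region of $\mathcal{H}_S$, every sign $\operatorname{sign}(s_i^\top w)$ is constant. Hence $\Phi_S$ restricted to $K$ is the restriction of a linear map. Each such $K$ is a pointed polyhedral cone, or $\R^r$ when $r=1$ and is split by the hyperplane $\{0\}$. So $K$ is generated by its extreme rays, which are one-dimensional intersections of hyperplanes of $\mathcal{H}_S$.

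Since $\Phi_S$ is linear on $K$, we have $\Phi_S(K)\subseteq \cone\{\Phi_S(\rho):\rho \text{ extreme ray of } K\}$. Thus $C(S)=\cone\{\Phi_S(\rho)\}$, where $\rho$ ranges over the finitely many rays of the arrangement, counted in both directions. For fixed $r$ these rays can be enumerated in polynomial time. I would enumerate all $(r-1)$-subsets of the $s_i$, compute the kernel of each subset whenever it is one-dimensional, and take both generators $\pm\rho$. This gives $O(M^{r-1})$ rational candidates of polynomial bit-length.

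The extreme rays of $C(S)$ are then extracted from this candidate list. For each candidate I solve one LP asking whether it lies in the cone generated by the remaining candidates (after merging duplicates that are positive multiples of each other). Each LP has polynomial size, so the whole procedure runs in polynomial time.

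\textbf{$\mathbf 1\in C(S)$.} Let $\mu$ be the uniform probability measure on the unit sphere. By rotation invariance, $\int|w^\top s_i|\,d\mu(w)=\kappa_r\|s_i\|_2$ for a constant $\kappa_r>0$. All $\|s_i\|_2$ are equal, so $\int\Phi_S\,d\mu$ is a positive multiple of $\mathbf 1$. The cone $C(S)$ is finitely generated, hence closed and convex, so this integral lies in $C(S)$.

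\textbf{$\mathbf 1\in\operatorname{int} C(S)$.} Suppose not. Then there is a supporting hyperplane: some $c\neq 0$ with $c^\top\Phi_S(w)\ge 0$ for all $w$ and $c^\top\mathbf 1=0$. Integrating against $\mu$ gives $\int c^\top\Phi_S\,d\mu=0$ with a nonnegative continuous integrand. Therefore $c^\top\Phi_S\equiv 0$ on the sphere, and by positive homogeneity on all of $\R^r$.

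Now fix $j\in[M]$. Because the $s_i$ are pairwise linearly independent, $s_j^\perp\neq s_i^\perp$ for $i\neq j$. Hence there is a point $w_j\in s_j^\perp$ with $s_i^\top w_j\neq 0$ for all $i\neq j$. In a small neighborhood of $w_j$, every term $c_i|s_i^\top w|$ with $i\neq j$ is linear. So $c_j|s_j^\top w|$ must equal a linear function near $w_j$, which forces $c_j=0$ because $|s_j^\top w|$ has a kink along $s_j^\perp$. This holds for every $j$, so $c=0$, a contradiction.

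(For $r=1$ the statement is immediate: pairwise independence forces $M=1$, and $C(S)=\R_{\ge0}$.)

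The step I expect to need the most care is the interior claim: setting up the supporting-hyperplane argument cleanly and using pairwise linear independence to choose the points $w_j$. The rest is standard arrangement enumeration plus LP redundancy removal.
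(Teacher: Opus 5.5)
\textbf{Gap: the arrangement regions need not be pointed.} There is a genuine gap in your finite-generation and enumeration step. You claim that the closure $K$ of every region of $\mathcal H_S$ is a pointed cone. This fails whenever $U=\operatorname{span}(S)\neq\R^r$, a case the lemma allows. Then every $K$ contains the nontrivial subspace $S^\perp$, so $K$ has no extreme rays and is not generated by them.

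Your algorithm breaks in this case. For $r=2$ and $S=\{e_1\}$, the only $(r-1)$-subset has kernel $\operatorname{span}(e_2)$. Both candidates $\pm e_2$ give $\Phi_S(\pm e_2)=0$, so you would output $\{0\}$, although $C(S)=\R_{\ge 0}$. The case is not artificial. Through the interpolation proposition, the lemma is applied in Lemma~\ref{lem:edge_term} with $r=4$. If $n_i=n_j=1$, then $S$ consists of two vectors spanning a plane, there is no $3$-subset at all, and your candidate list is empty.

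\textbf{The repair.} The missing ingredient is to factor out $S^\perp$, which is exactly what the paper's projection step does. Since $\Phi_S(w+z)=\Phi_S(w)$ for $z\in S^\perp$, you may restrict to $U$. There the induced arrangement has only pointed regions, because $S$ spans $U$. Its rays are the lines $U\cap\{s_i: i\in I\}^\perp$, where $I$ ranges over sets of $r'-1$ linearly independent vectors of $S$ and $r'=\dim U$. The paper carries this out in coordinates via a rational basis matrix $P$ of $U$ and then lifts back. With this repair, your closedness claim and the LP-based removal of redundant generators go through.

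\textbf{The interior part.} Your argument for $\mathbf 1\in\operatorname{int}C(S)$ is correct and is essentially the paper's. Averaging $\Phi_S$ over a sphere gives a positive multiple of $\mathbf 1$. A supporting vector $c$ then yields a nonnegative function $c^\top\Phi_S$ with zero average, which must vanish identically. The kink of $|s_j^\top w|$ along $s_j^\perp$, available by pairwise linear independence, forces $c=0$.

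There are two minor differences. First, you integrate over the unit sphere of $\R^r$, whereas the paper uses the unit sphere of $U$. This is harmless, since $\int|w^\top s|\,d\mu(w)=\kappa_r\|s\|_2$ holds for every $s\in\R^r$. Second, you first prove $\mathbf 1\in C(S)$ and then use a supporting hyperplane, whereas the paper gets $\alpha^\top\mathbf 1\le 0$ directly from a Farkas-type separation. Your preliminary step uses that $C(S)$ is closed, so it depends on the repaired finite-generation argument.
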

\begin{proof}
    \textbf{$C(S)$ is finitely generated.}
    Let $U = \text{span}(S)$ and $r'=\dim(U)$.
    Calculate a rational basis matrix $P \in \mathbb{Q}^{r \times r'}$ for $U$ and set $h_m = P^\top s_m \in \mathbb{Q}^{r'}$.
    The vectors $h_1,\dots h_M$ span $\R^{r'}$, since $h_m^\top z = 0$ for all $m\in [M]$ implies that $Pz\in U$ is orthogonal to $\text{span}(S) = U$, and thus $Pz=0$ and $z=0$.

    Consider the hyperplane arrangement $\mathcal{H}=\left(\{w\in \R^{r}:s_m^\top w=0\}\right)_{m\in [M]}$.
    On each cone $Q$ of $\mathcal{H}$, $\Phi_S$ is linear since the signs of $s_m^\top w$ are fixed for all $w\in Q$.
    Projecting to $U$ using $P$ gives the hyperplane arrangement $\mathcal{H}_U=\left(\{z\in \R^{r'}:h_m^\top z=0\}\right)_{m\in [M]}$, and each cone $Q_U$ of $\mathcal{H}_U$ is pointed since $h_1,\dots, h_M$ span $\R^{r'}$.
    
    Thus, $Q_U$ is generated by its extreme rays.
    Let $z_1,\dots, z_q\in \Q^{r'}$ be the set of extreme rays of all cones of $\mathcal{H}_U$.
    We can enumerate them in polynomial time, using the fact that every extreme ray is the intersection of $r'-1$ linearly independent hyperplanes from $h_1^\top z=0,\dots, h_M^\top z=0$.
    Thus, by enumerating all subsets $I\in [M]$ of $r'-1$ linearly independent hyperplanes, computing a nonzero rational vector $z_I$ spanning $\{z\in \R^{r'}:h_i^\top z=0 \text{ for }i\in I\}$, and including both $z_I$ and $-z_I$, we obtain all extreme rays.
    In particular, there are at most $q\leq M^{\mathcal{O}(r)}$ extreme rays.

    Let $w_1,\dots, w_q\in \R^r$ with $w_t=Pz_t$ for all $t\in [q]$ be the extreme rays lifted back to $\R^r$.
    Now, fix a $w\in \R^r$.
    Modulo $S^\perp$, it lies in some pointed cone $Q_U$ of $\mathcal{H}_U$ and is a conic combination of extreme rays of $Q_U$, so
    \[
    w=z+\sum_{t=1}^q\lambda_t w_t,
    \qquad
    \text{for }z\in S^\perp, \lambda_t\geq 0.
    \]
    Since $\Phi_S(z)=0$ and $\Phi_S$ is linear on every cone $Q$ of $\mathcal{H}$, we have
    \[
    \Phi_S(w)=\sum_{t=1}^q\lambda_t \Phi_S(w_t),
    \]
    which proves that $C(S)=\cone\{\Phi_S(w_t):t\in [q]\}$ is finitely generated and thus a polyhedral cone.
    Note that we can compute $w_1,\dots, w_q$ and $\Phi_S(w_1),\dots, \Phi_S(w_q)$ in polynomial time.
    After removing scalar multiples, we can check in polynomial time whether $\Phi_S(w_t)$ is an extreme ray of $C(S)$ by solving a linear program.
    Since we only have polynomially many such vectors, the set of extreme rays of $C(S)$ can be computed in polynomial time.
    
    \textbf{$\mathbf{1}$ lies in the interior of $C(S)$.}
    Suppose that $\mathbf{1}\notin \operatorname{int}C(S)$.
    Then, a variant of Farkas' lemma gives a nonzero $\alpha\in \R^M$ with
    \[
    \alpha^\top x\geq 0
    \;\text{ for all }x\in C(S),
    \qquad
    \alpha^\top \mathbf{1}\leq 0.
    \]
    Define the continuous function $f(w)\coloneq\alpha^\top \Phi_S(w)=\sum_{m=1}^M \alpha_m|w^\top s_m|$ and observe that $f(w)\geq 0$ for all $w\in \R^r$.
    Since $\alpha\neq 0$, there is at least one $\alpha_m\neq 0$, and the nonzero summand $\alpha_m|w^\top s_m|$ in $f$ implies that $f$ is not affine in an open set around any point lying on the hyperplane $H=\{w\in \R^r:w^\top s_m=0\}$, since the vectors in $S$ are pairwise linearly independent and therefore induce different hyperplanes $\{w\in \R^r:w^\top s_i=0\}\neq H$ for all $i\neq m$, so other terms $\alpha_i|w^\top s_i|$ cannot cancel out the nonlinearity of $f$ along $H$.
    Hence, $f$ is nonzero.
    
    Let $S_U = \{u \in U:\|u\|_2 = 1\}$.
    Further, let $\sigma$ be the rotation-invariant probability measure with full support on $S_U$.
    Then there exists a $c_U > 0$ such that for every $s\in U$,
    \[
    \int_{S_U}|w^\top s|\,\text{d}\sigma(w)
    =\|s\|_2\int_{S_U}|w^\top s|/\|s\|_2\,\text{d}\sigma(w)
    =c_U\|s\|_2.
    \]
    To see this, observe that $\int_{S_U}|w^\top s|/\|s\|_2\,\text{d}\sigma(w)$ does not depend on the choice of the unit-vector $s/\|s\|_2$.
    Identifying $U$ with $\R^{r'}$ and choosing $e_1$ as the unit vector, it follows that the integral equals the expected value $\mathbb{E}_{x \sim \mathbb{S}^{r'-1}}[|x_1|]=c_U>0$ of the first coordinate of a point uniformly distributed on the unit sphere $\mathbb{S}^{r'-1}$.
    Since all $s\in S$ have equal Euclidean norm $\gamma\coloneq \|s_1\|_2>0$, we have
    \[
    \int_{S_U}f(w)\,\text{d}\sigma(w)
    =
    \sum_{m=1}^M \alpha_m\int_{S_U}|w^\top s_m|\,\text{d}\sigma(w)
    =\alpha^\top (c_U\gamma,\dots,c_U\gamma)
    =c_U\gamma\,\alpha^\top \mathbf{1}
    =0
    \]
    which implies, since $f$ is continuous, nonnegative, and $\sigma$ has full support, that $f$ is identically zero on $S_U$.
    Since $f$ is positively homogeneous, $f$ is identically zero on $U$.
    For every $w\in \R^r$, if $w_U$ denotes the projection of $w$ onto $U$, we have $w^\top s_m=w_U^\top s_m$ for all $m\in [M]$, so $f$ is identically zero on $\R^r$, contradicting that $f$ is nonzero.
    Thus, there is no nonzero $\alpha$ satisfying the above equations and hence $\mathbf{1}\in \operatorname{int}C(S)$.
\end{proof}

The only non-explicit part of the reduction is the following interpolation statement.
It states that, in fixed dimension, one can prescribe to a certain extent which values a symmetric zonotope support function has to attain on finitely many and pairwise linearly independent directions of equal Euclidean norm and is the key result that allows us to give a polynomial-time reduction in $2k$ dimensions.

\begin{proposition}\label{lem:interpolation}
    Fix $r\ge 1$.
    Let $S=\{s_1,\ldots,s_M\}\subset \mathbb Q^r\setminus \{0\}$ be pairwise linearly independent, with all points having equal Euclidean norm, and let $b \in \mathbb{Q}^M$.
    Then one can compute, in time polynomial in the input bit-length, rational vectors $g_1, \ldots, g_q \in \mathbb{Q}^r$ and a rational number $\delta > 0$ of polynomial encoding size such that $q\leq M^{\mathcal{O}(r)}$ and
    \begin{equation*}
        \sum_{t=1}^q | g_t^\top  s_m |
        =
        1 + \delta b_m
        \quad
        \text{ for every }m\in [M].
    \end{equation*}
\end{proposition}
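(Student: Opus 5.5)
The plan is to reduce the statement to \Cref{lem:cone_structure} together with a small linear program. By that lemma we can compute, in polynomial time, vectors $w_1,\dots,w_q\in\Q^r$ with $q\le M^{\mathcal{O}(r)}$ such that $C(S)=\cone\{\Phi_S(w_t):t\in[q]\}$. By the same lemma, $\mathbf{1}\in\operatorname{int}C(S)$. Since $C(S)$ is full-dimensional around $\mathbf 1$, there is some $\delta>0$ with $\mathbf{1}+\delta b\in C(S)$.

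Given such a $\delta$, we have $\mathbf 1+\delta b=\sum_t\lambda_t\Phi_S(w_t)$ with $\lambda_t\ge 0$. Setting $g_t\coloneq\lambda_t w_t$ gives, by positive homogeneity of $w\mapsto|w^\top s_m|$,
\[
\sum_t|g_t^\top s_m|=\sum_t\lambda_t|w_t^\top s_m|=1+\delta b_m
\]
for every $m\in[M]$, which is exactly the required identity. The vectors $g_t$ are rational as long as the $\lambda_t$ are rational.

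The remaining work is computing a rational $\delta$ and rational $\lambda$ of polynomial size.

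\textbf{First option: scaling $\delta$ down.} Let $\Phi$ be the $M\times q$ matrix with columns $\Phi_S(w_t)$, and consider the LP
\[
\max\{\delta:\ \Phi\lambda=\mathbf 1+\delta b,\ \lambda\ge 0,\ \delta\le 1\}.
\]
It is feasible with $\delta=0$, because $\mathbf 1\in C(S)$. Its optimum $\delta^*$ is strictly positive, because $\mathbf 1$ is an interior point. If $b\ne 0$ the optimum is bounded, and the cap $\delta\le 1$ also bounds it; if $b=0$, any $\delta$ works. By standard LP theory (ellipsoid or interior-point methods, with vertex solutions of polynomial bit-length), we can compute an optimal basic solution $(\lambda^*,\delta^*)$ in polynomial time, with $\delta^*$ and $\lambda^*$ rational of polynomial encoding size. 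Alternatively, we take $\delta=\delta^*$, or $\delta^*/2$ paired with the convex combination $\tfrac12(\lambda^*+\lambda^{(0)})$, where $\lambda^{(0)}$ is a feasible point for $\delta=0$; either way the conclusion holds.

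\textbf{Main obstacle.} The point needing care is that the LP has polynomial size. This requires $q$ and the bit-lengths of the $w_t$ to be polynomial, which holds because $r$ is fixed and \Cref{lem:cone_structure} gives polynomial-time computation with $q\le M^{\mathcal{O}(r)}$. The entries of $\Phi$ are $|w_t^\top s_m|$, which are rational of polynomial size. Positivity of $\delta^*$ is exactly the interiority statement of \Cref{lem:cone_structure}. With these in hand, the bit complexity of the optimal vertex follows from Cramer's rule, so I expect no further difficulty.
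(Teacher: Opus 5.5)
Your proposal is correct and follows essentially the same route as the paper: take the finitely many generators $\Phi_S(w_t)$ of $C(S)$ from \Cref{lem:cone_structure}, solve the polynomial-size LP maximizing $\delta$ subject to $\sum_t\lambda_t\Phi_S(w_t)=\mathbf{1}+\delta b$, $\lambda\ge 0$, $\delta\le 1$, obtain $\delta^*>0$ from $\mathbf{1}\in\operatorname{int}C(S)$, and set $g_t=\lambda_t w_t$. The only cosmetic difference is that the paper uses only the extreme rays of $C(S)$ rather than all generators produced by the lemma, which changes nothing.
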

\begin{proof}
    Let $\Phi_S(w_1),\dots, \Phi_S(w_q)$ be the extreme rays of $C(S)$ from \Cref{lem:cone_structure}.
    By \Cref{lem:cone_structure}, we can enumerate them in polynomial time, since the dimension $r$ is fixed.
    Further, we have $\mathbf{1}\in \operatorname{int} C(S)$, so the linear program
    \[
    \max\{\delta: \sum_{t=1}^q\lambda_t\Phi_S(w_t)=\mathbf{1}+\delta b, \lambda_t\geq 0,\delta\in [0,1]\}
    \]
    has an optimal solution $(\lambda,\delta)$ of polynomial encoding size with $\delta>0$.
    Since the linear program is of polynomial size, it can be solved in polynomial time and its optimal solution has polynomial encoding size \citep{Schrijver86}.
    We can then read off $\delta>0$ and $g_1,\dots, g_q\in \Q^r$ as $g_t=\lambda_t w_t$ from the optimal solution.
\end{proof}
Later, we will apply this interpolation to a pair of colors $i<j$, where $b_{a_ia_j}=1$ if the nodes $a_i$ and $a_j$ are adjacent, and $b_{a_ia_j}=0$ otherwise.
Note that this general technique shows how to encode the structure of an arbitrary bipartite graph into the support function of a centrally symmetric zonotope.

\subsubsection{Reduction}
We now construct the zonotope whose support function is the node term $V$, which enforces the choice of one node per color class.

\begin{lemma}\label{lem:node_selection}
    Given an instance $(G,k)$ of \textsc{Multicolored Clique}, one can construct in polynomial time a zonotope $Z\subset Y$ and define a finite set $\mathcal{T}\subseteq Y$ of unit vectors, where each point encodes one node per color, with
    \[
    h_Z(y)=1-\frac 12\operatorname{dist}(y,\mathcal{T})^2
    \qquad
    \text{for every } y\in Y\text{ with }\|y\|_2=1.
    \]
\end{lemma}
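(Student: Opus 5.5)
The plan is to reduce the lemma to a statement about convex hulls. For a unit vector $y$ and a finite set $\mathcal{T}$ of unit vectors,
\[
\dist(y,\mathcal{T})^2=\min_{t\in\mathcal{T}}\left(\|y\|_2^2+\|t\|_2^2-2t^\top y\right)=2-2\max_{t\in\mathcal{T}}t^\top y ,
\]
so $1-\frac12\dist(y,\mathcal{T})^2=\max_{t\in\mathcal{T}}t^\top y=h_{\conv(\mathcal{T})}(y)$. It therefore suffices to construct a zonotope $Z$ with $Z=\conv(\mathcal{T})$, where $\mathcal{T}$ consists of rational unit vectors encoding one node per color. I will build $Z$ as a product of centrally symmetric polygons, one per color. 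A product of polytopes is the Minkowski sum of the polytopes embedded into the blocks $Y_i$, so it is a zonotope as soon as each factor is. Its vertex set is the product of the factors' vertex sets, and its support function splits as $h_Z(y)=\sum_i h_{P_i}(y_i)$.

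\textbf{Per-color polygon.} For color $i$ with $n_i$ nodes, I would pick rational points $v_0,\dots,v_{n_i}$ on the unit circle, moving monotonically counterclockwise along the upper half circle from $v_0=(1,0)$ to $v_{n_i}=(-1,0)=-v_0$. For example, take the stereographic parametrization of \Cref{lem:gadget} with rational parameters $0=t_0<t_1<\dots<t_{n_i-1}<\infty$ and send the last point to $(-1,0)$. Set $g_p=\frac12(v_p-v_{p-1})$ and $P_i=Z_\pm(g_1,\dots,g_{n_i})$. The directions of the $g_p$ rotate monotonically through an angle of less than $\pi$ in total, so the vertices of this $2n_i$-gon are the sign patterns of prefix form:
\[
\sum_{p\le j}g_p-\sum_{p>j}g_p=v_j-\tfrac12(v_0+v_{n_i})=v_j ,
\]
together with their negatives. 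Thus $V(P_i)=\{\pm v_0,\dots,\pm v_{n_i-1}\}$, exactly $2n_i$ points, all on the unit circle. I assign node $v_{i,j+1}$ to the antipodal pair $\pm v_j$, so every vertex encodes exactly one node of $V_i$.

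\textbf{Gluing the colors with rational weights.} Simply placing the $P_i$ side by side gives vectors of norm $\sqrt k$, and rescaling by $1/\sqrt k$ is irrational. Instead, I choose positive rationals $a_1,\dots,a_k$ with $\sum_i a_i^2=1$. Such a point exists and can be computed in polynomial time by inverse stereographic projection of a rational point of $\R^{k-1}$ close to the preimage of $(1,\dots,1)/\sqrt k$, which lands in the open positive orthant of the unit sphere. Then I set
\[
Z=\sum_{i=1}^k\rho_i(a_iP_i),\qquad \mathcal{T}=\left\{\left(a_1x_1,\dots,a_kx_k\right):x_i\in V(P_i)\right\}.
\]
$Z$ is a symmetric zonotope with rational generators $a_i\rho_i(g_p)$, and $V(Z)=\mathcal{T}$. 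Every $t\in\mathcal{T}$ satisfies $\|t\|_2^2=\sum_i a_i^2\|x_i\|_2^2=1$, and $t$ encodes one node per color through its blocks. Hence $h_Z(y)=\max_{t\in\mathcal{T}}t^\top y=1-\frac12\dist(y,\mathcal{T})^2$ for all unit $y$, by the identity above. The encoding size is polynomial, since all parameters have $\mathcal{O}(\log n)$ bits apart from the $a_i$, which have bit-length polynomial in $k$.

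\textbf{Main obstacle.} I expect the main difficulty to be exactness under rationality: we need rational vertices lying exactly on the unit sphere, with exactly the prescribed vertex set and no spurious vertices off the sphere. The symmetric telescoping construction handles this within each block. The remaining delicate point is producing the positive rational unit vector $(a_1,\dots,a_k)$ efficiently, for which I would first try the inverse stereographic projection map.
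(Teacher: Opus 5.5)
Your proposal is correct and follows essentially the same construction as the paper. Both build one rational centrally symmetric polygon inscribed in the unit circle per color, scale these by positive rationals with $\sum_i a_i^2=1$, and use the identity $\dist(y,\mathcal{T})^2=2-2\max_{t\in\mathcal{T}}t^\top y$ for unit vectors (the paper carries out this computation blockwise). The differences are cosmetic: you verify the zonotope structure of each polygon explicitly via telescoping generators where the paper simply takes half-edges of $\conv\{\pm u_{i,a}\}$, and you obtain the weights by inverse stereographic projection where the paper writes down $\omega_i=\frac{4k}{4k^2+k-1}$ for $i<k$ and $\omega_k=\frac{4k^2-k+1}{4k^2+k-1}$, which is itself the inverse stereographic image of $\mathbf{1}_{k-1}/(2k)$.
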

\begin{proof}
    We choose positive rational numbers $\omega_1, \dots, \omega_k$ with $\sum_{i} \omega_i^2 = 1$ of polynomial bit-length.
    Specifically,
    \[
        \omega_i = \frac{4k}{4k^2 + k - 1}\;\text{ for }i \in [k-1], \qquad \omega_k = \frac{4k^2 - k + 1}{4k^2 + k - 1}.
    \]
    For every node $v_{i,a}\in V_i$ with $a\in [n_i]$, we define the rational unit vector
    \[
    u_{i, a} \coloneq \frac{(1-a^2 , 2a)}{1+a^2}\in \Q^2.
    \]
    Let $P_i\coloneq\conv\{\pm u_{i,a}: a\in [n_i]\}\subset \R^2$.
    List the vertices of $P_i$ in cyclic order as $v_{i,1},\dots, v_{i,2n_i}$ with $v_{i,r+n_i}=-v_{i,r}$ and define
    \[
    g_{i,r}\coloneq \frac{v_{i,r+1}-v_{i,r}}{2}\quad
    \text{for } r\in [n_i].
    \]
    Then we have that
    \[
    P_i=\sum_{r=1}^{n_i}\conv\left(\{-g_{i,r},g_{i,r}\}\right),
    \qquad
    h_{P_i}(y)=\sum_{r=1}^{n_i}\left|g_{i,r}^\top y\right|.
    \]
    We now embed each zonotope $P_i$ into $Y_i$ and define
    \[
    Z=\sum_{i=1}^k\rho_i(\omega_i P_i),
    \qquad
    V(y)\coloneq h_Z(y)=\sum_{i=1}^k \omega_i h_{P_i}(y_i).
    \]
    Finally, let
    \[
    \mathcal{T}\coloneq 
    \{(\sigma_1 \omega_1 u_{1,a_1},\dots, \sigma_k \omega_k u_{k,a_k}): a_i\in [n_i], \sigma\in \{\pm 1\}^k\}.
    \]
    Every $t\in \mathcal{T}$ is a unit vector.
    For any unit vector $y=(y_1,\dots, y_k)\in Y$, we have
    \begin{align*}
        \dist(y,\mathcal{T})^2
        &=\sum_{i=1}^k \min_{a,\sigma}\|y_i-\sigma \omega_i u_{i,a}\|_2^2\\
        &=\sum_{i=1}^k\left(\|y_i\|_2^2 + \omega_i^2 -2\omega_i\max_{a,\sigma}(\sigma  u_{i,a})^\top y_i\right)\\
        &=2-2\sum_{i=1}^k\omega_i h_{P_i}(y_i)
        =2-2V(y),
    \end{align*}
    which is equivalent to $h_Z(y)=1-\frac 12\operatorname{dist}(y,\mathcal{T})^2$.
\end{proof}
Now, we construct the symmetric zonotope whose support function is the edge term $E$ that encodes the edges between the nodes encoded by a point of $\mathcal{T}$.
\begin{lemma}\label{lem:edge_term}
Given an instance $(G,k)$ of \textsc{Multicolored Clique}, one can construct in polynomial time a zonotope $Z_E\subset Y$ with support function $E\colon Y\mapsto \R$ and positive rational numbers $E_{\rm cl}$, $\delta_*$ such that, for every vector $t\in \mathcal{T}$ encoding nodes $\{v_{1,a_1},\dots, v_{k,a_k}\}$,
    \[
    E(t)
    \begin{cases}
        =E_{\rm cl} & \text{if the nodes $\{v_{1,a_1},\dots, v_{k,a_k}\}$ form a $k$-colored clique,}\\
        \leq E_{\rm cl}-\delta_* & \text{otherwise.}
    \end{cases}
    \]
\end{lemma}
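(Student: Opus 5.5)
The plan is to build $E$ as a sum over color pairs, $E=\sum_{1\le i<j\le k} E_{ij}$. Each $E_{ij}$ will be the support function of a symmetric zonotope living in $Y_i\times Y_j$, and each will be obtained from \Cref{lem:interpolation} with $r=4$. For a fixed pair $i<j$, I take the direction set
\[
S_{ij}=\{(\omega_iu_{i,a},\,\omega_ju_{j,b}) : a\in[n_i],\,b\in[n_j]\}\subset\Q^4 .
\]
These vectors are rational. They all have Euclidean norm $\sqrt{\omega_i^2+\omega_j^2}$, and they are pairwise linearly independent, since distinct $u_{i,a}$ are distinct unit vectors with positive second coordinate and hence not $\pm$ each other. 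I set $b_{(a,b)}=1$ if $\{v_{i,a},v_{j,b}\}\in E$ and $0$ otherwise. \Cref{lem:interpolation} then returns rational $g^{ij}_1,\dots,g^{ij}_{q}\in\Q^4$ and $\delta_{ij}>0$, and I define $E_{ij}(y)=\sum_t|g^{ij\top}_t(y_i,y_j)|$. This is the support function of the symmetric zonotope with generators $\rho_{ij}(g^{ij}_t)$. The number of generators is $M^{O(1)}$ because $r=4$ is fixed, so across all pairs there are polynomially many. $Z_E$ is the Minkowski sum of these zonotopes, and its support function is $E=\sum E_{ij}$, as required.

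Next I handle the sign patterns in $\mathcal{T}$. A point $t\in\mathcal{T}$ has blocks $\sigma_i\omega_iu_{i,a_i}$. Because each $E_{ij}$ is even, $E_{ij}(t)=E_{ij}(\pm(\omega_iu_{i,a_i},\pm\omega_ju_{j,a_j}))$, and only the relative sign $\sigma_i\sigma_j$ matters. This is a real issue, because \Cref{lem:interpolation} only controls the value on $S_{ij}$ and not on vectors with a flipped second block. I see two ways to resolve it. The first is to enlarge $S_{ij}$ to also contain $(\omega_iu_{i,a},-\omega_ju_{j,b})$ with the same target $b_{(a,b)}$. Pairwise independence still holds: the first blocks already force the scalar to be $\pm1$, and then the second blocks differ. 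The norms stay equal. With this enlargement, $E_{ij}(t)=1+\delta_{ij}b_{(a_i,a_j)}$ for all $t\in\mathcal{T}$ regardless of signs. I will use this option.

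To equalize the gaps across pairs, I rescale. I let $\delta_*=\min_{ij}\delta_{ij}$ and replace each $E_{ij}$ by $(\delta_*/\delta_{ij})E_{ij}$, which only multiplies the generators by a positive rational. After rescaling, $E_{ij}(t)=\delta_*/\delta_{ij}+\delta_*b_{(a_i,a_j)}$. Setting $E_{\rm cl}=\sum_{i<j}\delta_*/\delta_{ij}+\binom k2\delta_*$ gives $E(t)=E_{\rm cl}$ exactly when all $\binom k2$ pairs are adjacent, that is, when the selected nodes form a $k$-colored clique. Otherwise at least one pair contributes $\delta_*$ less, so $E(t)\le E_{\rm cl}-\delta_*$. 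Both quantities are rational with polynomial encoding size, since each $\delta_{ij}$ has polynomial size.

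The only delicate points are the sign issue and checking the hypotheses of \Cref{lem:interpolation} (equal norms and pairwise independence). I expect the sign issue to be the main obstacle, and the symmetric enlargement of $S_{ij}$ handles it. The polynomial running time follows because there are $\binom k2$ calls, each polynomial for fixed $r=4$.
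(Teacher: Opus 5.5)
Your proof is correct and follows the paper's construction essentially verbatim. The paper also applies \Cref{lem:interpolation} with $r=4$ to the enlarged set $\{(\omega_iu_{i,a},\tau\omega_ju_{j,b}) : \tau\in\{\pm1\}\}$, and it uses evenness of $H_{ij}$ to reduce the relative sign $\sigma_i\sigma_j$ to this set. The only difference is your rescaling by $\delta_*/\delta_{ij}$, which is harmless but unnecessary: with $E_{\rm cl}=\sum_{i<j}(1+\delta_{ij})$, a non-adjacent pair already costs $\delta_{ij}\ge\delta_*$.
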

\begin{proof}
    Let $Z$ and $\mathcal{T}$ be the sets from \Cref{lem:node_selection}.
    Fix $1\leq i<j\leq k$.
    For $a\in [n_i]$, $b\in [n_j]$, and $\tau\in \{\pm 1\}$, set
    \[
    s_{a, b, \tau} \coloneq (\omega_iu_{i, a}, \tau \omega_j u_{j, b}) \in Y_i \times Y_j\cong \R^4.
    \]
    These $2n_in_j$ vectors are pairwise linearly independent and all have the same Euclidean norm.
    Define
    \[
    e_{a,b}\coloneq
    \begin{cases}
        1& \text{ if }\{v_{i,a},v_{j,b}\}\in E,\\
        0& \text{ otherwise.}
    \end{cases}
    \]
    Applying \Cref{lem:interpolation} for $r=4$ to $S=\{s_{a, b, \tau}: a\in [n_i], b\in [n_j],\tau\in\{\pm1\}\}$ and the vector $e_{a,b}$ yields rational vectors $g_{ij,1},\dots, g_{ij,q_{ij}}\in Y_i\times Y_j$ with $q_{ij}\leq (2n_in_j)^{\mathcal{O}(4)}$, and a rational number $\delta_{ij}>0$.
    Define
    \[
    H_{ij}(x_i,x_j)
    \coloneq
    \sum_{t=1}^{q_{ij}}\left|g_{ij,t}^\top (x_i,x_j)\right|.
    \]
    Then
    \[
    H_{ij}(s_{a,b,\tau})=1+\delta_{ij}e_{ab}
    \qquad
    \text{for }a\in [n_i], b\in [n_j],\tau\in\{\pm1\}.
    \]
    We now embed every vector $g_{ij,t}$ into $Y$ using $\rho_{ij}$ and define
    \[
    E(y)\coloneq \sum_{1\leq i<j\leq k}H_{ij}(y_i,y_j).
    \]
    Thus $E$ is the support function of a symmetric zonotope.
    Set
    \[
    \delta_* = \min_{1\le i < j \le k} \delta_{ij} > 0,
    \qquad
    E_{\rm cl} = \sum_{1 \le i < j \le k} (1 + \delta_{ij}).
    \]
    Now, consider a $t=(\sigma_1\omega_1 u_{1,a_1},\dots, \sigma_k\omega_k u_{k,a_k})\in \mathcal{T}$.
    For a pair $i < j$, set $\tau_{ij}=\sigma_i\sigma_j$.
    Since $H_{ij}$ is even,
    \[
    H_{ij}(t_i,t_j)
    =H_{ij}(s_{a_i,a_j,\tau_{ij}})
    =1+\delta_{ij}e_{a_i,a_j}
    \]
    and thus
    \[
    E(t)
    \begin{cases}
        =E_{\rm cl} & \text{if the chosen nodes form a clique,}\\
        \leq E_{\rm cl}-\delta_* & \text{otherwise.}
    \end{cases}
    \]
\end{proof}

We are now ready to give the reduction that combines the node and edge terms.

\begin{proposition}\label{prop:second_reduction_result}
    Given an instance $(G,k)$ of \textsc{Multicolored Clique}, one can construct in polynomial time a generator matrix $A_G\in \Q^{2k\times q}$ and rational numbers $L>\Delta>0$ of polynomial encoding size such that
    \[
    \max_{z\in Z(A_G)}\|z\|_2
    \begin{cases}
    \geq L & \text{if $(G,k)$ is a yes-instance,}\\
    \leq L-\Delta & \text{otherwise.}
    \end{cases}
    \]
    In particular, this gives a W[1]-hardness reduction for \textsc{$L_2$-Max on Zonotopes} with dimension $2k$.
\end{proposition}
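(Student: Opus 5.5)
The plan is to combine the node term of \Cref{lem:node_selection} and the edge term of \Cref{lem:edge_term} into the single support function
\[
F\coloneq V+\eta E,
\]
for a small rational $\eta>0$ fixed below. Since $V=h_Z$ and $E=h_{Z_E}$ are support functions of symmetric zonotopes, $F$ is the support function of the symmetric zonotope $Z+\eta Z_E=Z_{\pm}(G')$. Here $G'$ collects the generators $\omega_i\rho_i(g_{i,r})$ and $\eta\rho_{ij}(g_{ij,t})$. We then set $A_G\coloneq[G',-G']$. Since $\conv\{0,g\}+\conv\{0,-g\}=\conv\{-g,g\}$, we get $Z(A_G)=Z_{\pm}(G')$. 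By the identity $\max_{z\in Z_{\pm}(G')}\|z\|_2=\max_{\|y\|_2=1}F(y)$ from the beginning of this subsection, it then suffices to bound $F$ on the unit sphere of $Y\cong\R^{2k}$. All generators are rational and of polynomial size by \Cref{lem:node_selection} and \Cref{lem:edge_term}, as long as $\eta$ is too.

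\textbf{Yes-instances.} Let $\{v_{1,a_1},\dots,v_{k,a_k}\}$ be a $k$-colored clique and take the unit vector $t=(\omega_1u_{1,a_1},\dots,\omega_ku_{k,a_k})\in\mathcal{T}$. Then $\dist(t,\mathcal{T})=0$, so $V(t)=1$ and $E(t)=E_{\rm cl}$. Hence the maximum is at least $L\coloneq 1+\eta E_{\rm cl}$.

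\textbf{No-instances.} Fix a unit $y$, let $t\in\mathcal{T}$ be a nearest point, and put $D=\|y-t\|_2$. We need a Lipschitz bound for $E$. Since $E(y)=\sum|g^\top y|$ over all edge generators $g$, the triangle and Cauchy--Schwarz inequalities give $|E(y)-E(t)|\le \Lambda D$ with the rational constant $\Lambda\coloneq\sum_g\|g\|_1\ge\sum_g\|g\|_2$. This choice avoids square roots and has polynomial encoding size. Because there is no clique, $E(t)\le E_{\rm cl}-\delta_*$ by \Cref{lem:edge_term}. Together with \Cref{lem:node_selection} this gives
\[
F(y)\le 1-\tfrac12 D^2+\eta(E_{\rm cl}-\delta_*)+\eta\Lambda D\le L-\eta\delta_*+\tfrac12\eta^2\Lambda^2,
\]
where the last step maximizes the concave quadratic $\eta\Lambda D-\tfrac12D^2$ over $D$. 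Choosing $\eta\coloneq\delta_*/\Lambda^2$ yields $F(y)\le L-\tfrac12\eta\delta_*$. We therefore set $\Delta\coloneq\eta\delta_*/2$. Since $\delta_*\le\delta_{ij}\le 1$ and $\Lambda>0$, we have $\Delta<1<L$, and all of $\eta,\Delta,L$ are rational of polynomial size.

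\textbf{Conclusion and main obstacle.} The dimension is $2k$ and the construction runs in polynomial time, because \Cref{lem:interpolation} is applied with fixed $r=4$. Combined with W[1]-hardness and the ETH lower bound of \textsc{Multicolored Clique}, this gives the hardness claim. The main obstacle is the trade-off between the linear edge gain $\eta\Lambda D$ and the quadratic node loss $\tfrac12D^2$. This step needs a clean, rational, polynomially encoded Lipschitz constant for $E$, and it must apply globally rather than only near $\mathcal{T}$. Using the $\ell_1$-norms of the generators provides exactly such a constant.
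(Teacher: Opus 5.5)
Your proposal is correct and follows the paper's proof essentially step for step. It uses the same $F=V+\eta E$ realized as a symmetric zonotope via $[G',-G']$, the same $\ell_1$-based global Lipschitz constant for $E$, and the same trade-off between the quadratic node loss and the linear edge gain. The only difference is your choice $\eta=\delta_*/\Lambda^2$, where the paper takes $\delta_*/(8\Lambda^2)$ and adds $1$ to $\Lambda$; this is harmless.

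One small slip: $\Delta<1$ does not follow from $\Lambda>0$ alone, since $\Delta=\delta_*^2/(2\Lambda^2)$. However, $L>\Delta$ holds directly, because $L-\Delta=1+\eta(E_{\rm cl}-\delta_*/2)>1$ and $E_{\rm cl}\ge 1+\delta_*$.
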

\begin{proof}
    We now combine the node and edge terms.
    We use the objects and notation from the proofs of \Cref{lem:node_selection} and \Cref{lem:edge_term}.
    Note that $\Lambda\coloneq 1+\sum_{1 \le i < j \le k}\sum_{t=1}^{q_{ij}}\|g_{ij,t}\|_1$ is an upper bound on the $L_2$-Lipschitz constant of $E$, since for $y,y'\in Y$,
    \begin{align*}
    |E(y)-E(y')|
    &\leq
    \sum_{1 \le i < j \le k}\sum_{t=1}^{q_{ij}}\left|\left|g_{ij,t}^\top(y_i,y_j)\right|-\left|g_{ij,t}^\top(y'_i,y'_j)\right|\right|\\
    &\leq \sum_{1 \le i < j \le k}\sum_{t=1}^{q_{ij}}\left|g_{ij,t}^\top(y_i-y'_i,y_j-y'_j)\right|\\
    &\leq \left(\sum_{1 \le i < j \le k}\sum_{t=1}^{q_{ij}}\|g_{ij,t}\|_2\right)\|y-y'\|_2
    \leq \left(\sum_{1 \le i < j \le k}\sum_{t=1}^{q_{ij}}\|g_{ij,t}\|_1\right)\|y-y'\|_2
    <\Lambda\|y-y'\|_2,
    \end{align*}
    where the second inequality is the reverse triangle inequality, the third is Cauchy-Schwarz, and the fourth uses $\|g_{ij,t}\|_2\leq \|g_{ij,t}\|_1$.
    We now set
    \[
    \eta\coloneq \frac{\delta_*}{8\Lambda^2}
    \quad
    \text{and}
    \quad
    F(y)\coloneq V(y)+\eta E(y).
    \]
    If $G$ has a $k$-colored clique $\{v_{1,a_1},\dots, v_{k,a_k}\}$, then setting $t=(\omega_1 u_{1,a_1},\dots, \omega_k u_{k,a_k})\in \mathcal{T}$ gives, by \Cref{lem:edge_term},
    \[
    F(t)=1+\eta E_{\rm cl}.
    \]
    Now suppose that $G$ has no $k$-colored clique.
    Let $y\in Y$ be a unit vector, choose $t\in \mathcal{T}$ with $\|y-t\|_2=\dist(y,\mathcal{T})$ and write $D=\|y-t\|_2$.
    By \Cref{lem:node_selection}, $V(y)=1-D^2/2$ and the $L_2$-Lipschitz constant bound for $E$ gives
    \[
    E(y)
    \leq E(t)+\Lambda D
    \leq E_{\rm cl}-\delta_*+\Lambda D.
    \]
    Therefore, using \Cref{lem:edge_term} and $\eta\Lambda^2=\delta_*/8$,
    \begin{align*}
    F(y)
    &\leq 1-D^2/2+\eta(E_{\rm cl}-\delta_*+\Lambda D)\\
    &=1+\eta(E_{\rm cl}-\delta_*)-(D-\eta \Lambda)^2/2+\eta^2\Lambda^2/2\\
    &\leq 1+\eta(E_{\rm cl}-\delta_*)+\eta\delta_*/16\\
    &<1+\eta(E_{\rm cl}-\delta_*/2).
    \end{align*}
    Conceptually, moving away from $\mathcal{T}$ causes a loss quadratic in $D$ in the node term, while the edge term can increase only linearly with $D$.
    Choosing $\eta$ small enough makes the quadratic loss dominate the linear gain.
    Now, with
    \[
    L\coloneqq 1+\eta E_{\rm cl},
    \qquad
    \Delta\coloneqq\eta \delta_*/2,
    \]
    we have
    \[
    \max_{\|y\|_2=1}F(y)
    \begin{cases}
        \geq L & \text{if $(G,k)$ is a yes-instance,}\\
        \leq L-\Delta & \text{otherwise.}
    \end{cases}
    \]
    By construction, $F$ is the support function of a symmetric zonotope with generators $\omega_i \rho_i(g_{i,r})$ for $i\in [k],\,r\in [n_i]$ and $\eta \rho_{ij}(g_{ij,t})$ for $1\leq i<j\leq k,\,t\in [q_{ij}]$.
    Let $A$ be the matrix with these vectors as columns and set $A_G\coloneqq[A,-A]$.
    Then, $Z_\pm(A)=Z(A_G)$ and $F=h_{Z_\pm(A)}=h_{Z(A_G)}$, which implies
    \[
    \max_{z\in Z(A_G)}\|z\|_2
    =
    \max_{\|y\|_2=1}F(y).
    \]
    Then
    \[
    \max_{z\in Z(A_G)}\|z\|_2
    \begin{cases}
    \geq L & \text{if $(G,k)$ is a yes-instance,}\\
    \leq L-\Delta & \text{otherwise.}
    \end{cases}
    \]
    Finally, observe that all numbers can be computed in polynomial time and have polynomial encoding size in the input.
    Thus, the whole construction is a polynomial-time parameterized reduction from \textsc{Multicolored Clique} to \textsc{$L_2$-Max on Zonotopes} with dimension $2k$.
\end{proof}

\section{Hardness for \texorpdfstring{$L_p$}{Lp}-norms with \texorpdfstring{$p\in (1,\infty)$}{p ∈ (1,∞)}}
\label{sec:Lphardness}

We first show how to generalize the $L_2^2$-case to $L_p^p$ and then transform the $L^p_p$-case to $L_p$-norm.

\subsection{From \texorpdfstring{$L_2^2$ to $L_p^p$ for $p\in (1,\infty)$}{L_2^2 to L_p^p for p ∈ (1,∞)}}
Although the gadget from \Cref{lem:gadget} can be (approximately) transferred to the $p\in (1,\infty)$ case as previously discussed, the construction from the proof of \Cref{thm:l2norm_hardness} does not directly transfer to the \textsc{$L_p^p$-Max on Zonotopes} case using the same proof strategy as before, since $\|s + Q_F\|_p^p$ for $p\neq 2$ has no analogous decomposition as $\|s + Q_F\|_2^2=\|s\|_2^2 + \|Q_F\|_2^2+2s^\top Q_F$.

Thus, in the following, we present a reduction that lifts the generator matrix $A_G\in \Q^{d\times (n+m+1)}$ with $x=A_G\sigma$ for a binary vector $\sigma$ to a higher-dimensional generator matrix $B_G\in \Q^{2d\times (n+m+2)}$ with $y=(\mathbf{1}_d+\varepsilon x, \mathbf{1}_d-\varepsilon x)^\top=B_G\tau$ for suitable binary vectors $\tau$ and then approximates $\|y\|_p^p$ by a constant plus a small multiple of $\|x\|_2^2$, up to a small error term.

The geometric idea is to antisymmetrically embed the original zonotope by $x\mapsto(x,-x)$ into the tangent space of the $L_p$-sphere at the all-ones vector $b=\mathbf 1_{2d}$. After scaling this copy by a sufficiently small factor $\varepsilon$, we add $b$ as an additional generator. The resulting zonotope is the prism
$
[0,b]+\varepsilon JZ(A_G),
$ where $
Jx\coloneq(x,-x)$,
whose two parallel faces are $\varepsilon JZ(A_G)$ and $b+\varepsilon JZ(A_G)$. The latter is a translated copy of the original zonotope lying in the affine tangent hyperplane at $b$. For sufficiently small $\varepsilon$, this translated face dominates the $L_p^p$-objective, whereas the untranslated face remains close to the origin.

Since the translated zonotope lies in the affine tangent hyperplane, the first-order term of the $L_p^p$-objective vanishes along it. The leading variation is therefore of second order. Moreover, the Hessian of the $L_p^p$-objective at $b$ is $p(p-1)$ times the identity, so this second-order term is proportional to $\|x\|_2^2$. By choosing~$\varepsilon$ sufficiently small, the higher-order terms becomes smaller than the gap in the original $L_2^2$-maximization instance. Consequently, $L_p^p$-maximization over the new zonotope reproduces $L_2^2$-maximization over the original one.

For $\varepsilon>0$, define the affine embedding
$\Phi_\varepsilon(x)\coloneq b+\varepsilon Jx
=(\mathbf 1_d+\varepsilon x,\mathbf 1_d-\varepsilon x)$.
The following lemma formalizes the local Euclideanization and shows that
it applies uniformly to any bounded compact set.

\begin{lemma}\label{lem:taylor_estimate}
Let $p\in(1,\infty)$. For every $x\in\R^d$ and every $\varepsilon>0$
satisfying $\varepsilon\|x\|_\infty\leq1/2$, we have
\[
\left|
\|\Phi_\varepsilon(x)\|_p^p
-
2d
-
p(p-1)\varepsilon^2\|x\|_2^2
\right|
\leq
C_p\varepsilon^4\|x\|_4^4,
\qquad
C_p\coloneq p^4 2^{\lceil p\rceil}.
\]
In particular, let $K\subseteq[-R,R]^d$ be a nonempty compact set and
set $M_K\coloneq\max_{x\in K}\|x\|_2^2$. If
$\varepsilon R\leq1/2$, then
\[
\left|
\max_{x\in K}\|\Phi_\varepsilon(x)\|_p^p
-
2d
-
p(p-1)\varepsilon^2M_K
\right|
\leq
dC_p\varepsilon^4R^4.
\]
\end{lemma}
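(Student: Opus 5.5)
The plan is to reduce everything to a one-variable Taylor estimate. Since $\|\Phi_\varepsilon(x)\|_p^p=\sum_{i=1}^d\bigl(|1+\varepsilon x_i|^p+|1-\varepsilon x_i|^p\bigr)$ and $\varepsilon|x_i|\le 1/2$, all bases are positive and the absolute values can be dropped. So it suffices to study the even function $\psi(t)\coloneq(1+t)^p+(1-t)^p$ on $|t|\le 1/2$ and show
\[
\bigl|\psi(t)-2-p(p-1)t^2\bigr|\le C_p t^4 .
\]
Summing over $i$ with $t=\varepsilon x_i$ then gives the first claim, because $\sum_i 2=2d$, $\sum_i p(p-1)\varepsilon^2x_i^2=p(p-1)\varepsilon^2\|x\|_2^2$, and $\sum_i C_p\varepsilon^4x_i^4=C_p\varepsilon^4\|x\|_4^4$.

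For the one-variable bound I would use Taylor's theorem with Lagrange remainder of order four around $0$. Since $\psi$ is even, its odd derivatives vanish at $0$. Moreover $\psi(0)=2$ and $\psi''(0)=2p(p-1)$, so the quadratic term is $p(p-1)t^2$. It follows that $\psi(t)-2-p(p-1)t^2=\psi^{(4)}(\xi)t^4/24$ for some $|\xi|\le 1/2$. We have
\[
\psi^{(4)}(\xi)=p(p-1)(p-2)(p-3)\bigl[(1+\xi)^{p-4}+(1-\xi)^{p-4}\bigr].
\]
The coefficient is bounded by $p^4$. Each bracket term lies between $(1/2)^{p-4}$ and $(3/2)^{p-4}$, whichever is larger, hence is at most $\max\{2^{4-p},(3/2)^{p}\}\le 2^{\lceil p\rceil}\cdot 8$ or so. Dividing by $24$ and combining the factor $2$ from the two terms, the total should stay below $p^4 2^{\lceil p\rceil}$.

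\textbf{Main obstacle.} The only delicate point is this constant check. For $p<4$ the factor $(1/2)^{p-4}\le 16$ must be absorbed: $2\cdot 16/24<2\le 2^{\lceil p\rceil}$. For $p\ge 4$ we have $2(3/2)^{p-4}/24\le 2^{\lceil p\rceil}$. In both cases the bound holds with room to spare, so I would just do these two cases explicitly.

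For the ``in particular'' part, let $x\in K$. Then $\|x\|_\infty\le R$, so $\varepsilon\|x\|_\infty\le 1/2$, and $\|x\|_4^4\le dR^4$. The first part therefore gives, uniformly on $K$,
\[
\bigl|\|\Phi_\varepsilon(x)\|_p^p-g(x)\bigr|\le dC_p\varepsilon^4R^4,\qquad g(x)\coloneq 2d+p(p-1)\varepsilon^2\|x\|_2^2 .
\]
Two continuous functions that are uniformly within $\eta$ of each other on a compact set have maxima within $\eta$ of each other. Finally, $\max_K g=2d+p(p-1)\varepsilon^2M_K$, which concludes the argument.
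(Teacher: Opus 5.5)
Your proposal is correct and follows the paper's proof essentially step for step. Both arguments use a fourth-order Lagrange-remainder expansion of $t\mapsto(1+t)^p+(1-t)^p$ on $[-1/2,1/2]$, a two-case bound on the fourth derivative, summation over coordinates with $t=\varepsilon x_i$, and the uniform-closeness argument for the maxima over $K$. The only difference is that you bound the coefficient $|p(p-1)(p-2)(p-3)|$ by $p^4$, where the paper uses $2p^3$ for $p\le 4$; this still fits within $C_p$ because $2^{\lceil p\rceil}\ge 4$. Your tentative ``$\le 2^{\lceil p\rceil}\cdot 8$ or so'' remark is unnecessary, since the explicit two-case check that follows it is already complete.
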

\begin{proof}
This is a direct application of Taylor's theorem with Lagrange
remainder; see, for example, \citep[Theorem 5.15]{rudin1976principles}.
Define
\[
g_p:[-1/2,1/2]\to\R_{\geq0},
\qquad
t\mapsto(1+t)^p+(1-t)^p.
\]
The function is even, so it suffices to consider $t\in[0,1/2]$.
Let $g_p^{[n]}$ denote the $n$-th derivative of $g_p$. We have
\[
g_p(0)=2,
\qquad
g_p^{[1]}(0)=g_p^{[3]}(0)=0,
\qquad
g_p^{[2]}(0)=2p(p-1),
\]
and
\[
g_p^{[4]}(t)
=
p(p-1)(p-2)(p-3)
\left((1+t)^{p-4}+(1-t)^{p-4}\right).
\]
Taylor's theorem gives a point $\xi\in[0,t]$ such that
\[
g_p(t)-2-p(p-1)t^2
=
\frac{g_p^{[4]}(\xi)}{24}t^4.
\]

Suppose first that $p\in(1,4]$. Since
$1\pm\xi\in[1/2,3/2]$, we have
$(1\pm\xi)^{p-4}\leq2^{4-p}$ and therefore
\[
(1+\xi)^{p-4}+(1-\xi)^{p-4}\leq2^{5-p}.
\]
Combining this with
$|p(p-1)(p-2)(p-3)|\leq2p^3$ yields
\[
\left|g_p(t)-2-p(p-1)t^2\right|
\leq
\frac{2p^3}{24}2^{5-p}t^4
=
\frac{p^32^{3-p}}{3}t^4
\leq C_pt^4.
\]
Now suppose that $p>4$. In this case,
$(1\pm\xi)^{p-4}\leq(3/2)^{p-4}$, and hence
\[
\left|g_p(t)-2-p(p-1)t^2\right|
\leq
\frac{p^4}{12}\left(\frac32\right)^{p-4}t^4
\leq C_pt^4.
\]

We apply this estimate with $t=\varepsilon x_i$ for every $i\in[d]$.
Since $\varepsilon\|x\|_\infty\leq1/2$, all these values belong to
$[-1/2,1/2]$. Consequently,
\[
\begin{aligned}
\|\Phi_\varepsilon(x)\|_p^p
&=
\sum_{i=1}^d
\left((1+\varepsilon x_i)^p+(1-\varepsilon x_i)^p\right)\\
&=
2d+p(p-1)\varepsilon^2\|x\|_2^2+\mathcal R(x),
\end{aligned}
\]
where
$
|\mathcal R(x)|
\leq C_p\varepsilon^4\sum_{i=1}^d x_i^4
=C_p\varepsilon^4\|x\|_4^4.
$
This proves the first statement.

Finally, let $K\subseteq[-R,R]^d$. The first statement gives, uniformly
for all $x\in K$,
\[
\left|
\|\Phi_\varepsilon(x)\|_p^p
-
2d
-
p(p-1)\varepsilon^2\|x\|_2^2
\right|
\leq dC_p\varepsilon^4R^4.
\]
Taking maxima over $x\in K$ preserves this error bound.
To see this, let $g(x) = 2d+p(p-1) \varepsilon^2\|x\|_2^2$ and let $x^* \in \operatorname{argmax}_{x \in K} \|\Phi_\varepsilon(x)\|_p^p$.
Then,
\[
\|\Phi_\varepsilon(x^*)\|_p^p
\leq g(x^*) +dC_p\varepsilon^4R^4
\leq \max_{x\in K} g(x) + dC_p\varepsilon^4R^4
\]
and analogously $ \max_{x\in K} g(x) \leq \|\Phi_\varepsilon(x^*)\|_p^p +dC_p\varepsilon^4R^4$, implying the second statement.
\end{proof}
The preceding lemma applies to any bounded compact set. We now
specialize it to zonotopes, for which the transformed set has an
explicit generator representation. This gives a general transfer from
$L_2^2$-maximization with a positive gap to $L_p^p$-maximization.

\begin{proposition}\label{prop:L2toLpTransfer}
Fix $p\in(1,\infty)\cap\Q$. Given a matrix
$A\in\Q^{d\times q}$ and rational numbers $L>\Delta>0$, one can
construct in polynomial time a matrix $B\in\Q^{2d\times(q+1)}$ and
rational numbers $L^*,\Delta^*>0$ of polynomial encoding size such
that
\[
\max_{x\in Z(A)}\|x\|_2^2\geq L
\quad\Longrightarrow\quad
\max_{y\in Z(B)}\|y\|_p^p\geq L^*
\]
and
\[
\max_{x\in Z(A)}\|x\|_2^2\leq L-\Delta
\quad\Longrightarrow\quad
\max_{y\in Z(B)}\|y\|_p^p\leq L^*-\Delta^*.
\]
\end{proposition}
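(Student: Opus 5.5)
The plan is to realize the prism $[0,b]+\varepsilon JZ(A)$ as a zonotope and then apply \Cref{lem:taylor_estimate} to its translated face. Let $R$ be a rational upper bound on $\max_{x\in Z(A)}\|x\|_\infty$; for instance $R\coloneq\sum_{j}\|a_j\|_1$, where the $a_j$ are the columns of $A$. Choose a rational $\varepsilon>0$ of polynomial bit-length that satisfies $\varepsilon R\le 1/2$ together with a second condition fixed below. Define $B\in\Q^{2d\times(q+1)}$ with columns $\varepsilon(a_j,-a_j)$ for $j\in[q]$ and one extra column $b=\mathbf 1_{2d}$. Then
\[
Z(B)=\varepsilon JZ(A)+[0,b]=\bigl\{\lambda b+\varepsilon Jx:\lambda\in[0,1],\,x\in Z(A)\bigr\}.
\]
By \Cref{lem:l2max_binaryformulation}, or simply by convexity of $\|\cdot\|_p^p$ in $\lambda$, the maximum over $Z(B)$ is attained at $\lambda\in\{0,1\}$. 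Hence
\[
\max_{y\in Z(B)}\|y\|_p^p=\max\Bigl\{\max_{x\in Z(A)}\|\varepsilon Jx\|_p^p,\ \max_{x\in Z(A)}\|\Phi_\varepsilon(x)\|_p^p\Bigr\}.
\]

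Next I bound each face. The untranslated face satisfies $\|\varepsilon Jx\|_p^p\le 2d(\varepsilon R)^p\le 2d\cdot 2^{-p}<2d$, which is well below the translated face, since every point there has $\|\Phi_\varepsilon(x)\|_p^p\ge 2d$. For the translated face, I apply \Cref{lem:taylor_estimate} with $K=Z(A)\subseteq[-R,R]^d$ and $M_K=\max_{x\in Z(A)}\|x\|_2^2$. This gives
\[
\max_{x\in Z(A)}\|\Phi_\varepsilon(x)\|_p^p\in\bigl[\,2d+p(p-1)\varepsilon^2M_K-E,\ 2d+p(p-1)\varepsilon^2M_K+E\,\bigr],\qquad E\coloneq dC_p\varepsilon^4R^4.
\]
I then impose the second condition on $\varepsilon$: $E\le p(p-1)\varepsilon^2\Delta/4$, which is equivalent to $\varepsilon^2\le p(p-1)\Delta/(4dC_pR^4)$. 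A rational $\varepsilon$ with polynomial bit-length satisfying this, for example a suitable power of $1/2$, is easy to find because $p$ is fixed and $C_p$ is a constant.

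With this choice I set $L^*\coloneq 2d+p(p-1)\varepsilon^2L-p(p-1)\varepsilon^2\Delta/4$ and $\Delta^*\coloneq p(p-1)\varepsilon^2\Delta/2$. Both are rational because $p$ is rational. If $M_K\ge L$, the lower bound gives a value of at least $L^*$. If $M_K\le L-\Delta$, the upper bound gives a value of at most $2d+p(p-1)\varepsilon^2(L-\Delta)+p(p-1)\varepsilon^2\Delta/4=L^*-\Delta^*$. The untranslated face is at most $2d\le L^*-\Delta^*$: this holds because $L-\Delta\ge 0$ forces $L^*-\Delta^*\ge 2d+p(p-1)\varepsilon^2(L-\Delta)/1\ge 2d$, using $L>\Delta$. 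So that face never interferes.

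The main thing to check carefully is the bookkeeping of encoding sizes: $R$, $\varepsilon$, $L^*$, $\Delta^*$ and the entries of $B$ must all have polynomial bit-length. The Taylor error must also genuinely sit below the gap, which means the bound $\varepsilon^4R^4$ has to hold uniformly over the zonotope. Both follow from the choices above, and the constant $C_p$ depends only on the fixed $p$.
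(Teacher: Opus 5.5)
Your proposal is correct and follows the paper's proof essentially step for step. It uses the same prism matrix $B$, the same split into lower and upper faces, the same error condition $dC_p\varepsilon^4R^4\le \Delta^*/2$, and the same thresholds $L^*$ and $\Delta^*$. The differences are cosmetic: your $R=\sum_j\|a_j\|_1$ and power-of-$\tfrac12$ choice of $\varepsilon$ replace the paper's explicit formulas, and you bound the lower face by $2d\le L^*-\Delta^*$ instead of by $d$. One small fix: take $R\ge 1$ (as the paper does with $\max\{1,\cdot\}$) so that your condition on $\varepsilon^2$ does not divide by $R^4=0$ when $A=0$.
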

\begin{proof}
Write $A=(a_1,\dots,a_q)$ and let $\varepsilon>0$ be specified below.
Using the notation $b=\mathbf 1_{2d}$ and $Jx=(x,-x)$ introduced
above, define
\[
B=
\begin{pmatrix}
\mathbf 1_d & \varepsilon a_1 & \dots & \varepsilon a_q\\
\mathbf 1_d &-\varepsilon a_1 & \dots &-\varepsilon a_q
\end{pmatrix}
\in\Q^{2d\times(q+1)}.
\]
The resulting zonotope is
$
Z(B)=[0,b]+\varepsilon JZ(A).
$
Since the maximum is attained at a vertex of the zonotope, we have that
\[
\max_{y\in Z(B)}\|y\|_p^p
=
\max\left\{
\max_{x\in Z(A)}\|\varepsilon Jx\|_p^p,
\max_{x\in Z(A)}\|\Phi_\varepsilon(x)\|_p^p
\right\}.
\]
We refer to these two terms as the maxima over the lower and upper
faces of the prism, respectively.

Fist, observe that $Z(A)\subseteq[-R,R]^d$, where
$
R\coloneq\max\{1,q\max_{i\in[q],\,j\in[d]}|(a_i)_j|\}.
$
Set $M\coloneq\max_{x\in Z(A)}\|x\|_2^2$. For the uppper face, assuming that
$\varepsilon R\leq1/2$, \Cref{lem:taylor_estimate} gives
\[
\left|
\max_{x\in Z(A)}\|\Phi_\varepsilon(x)\|_p^p
-
2d
-
p(p-1)\varepsilon^2M
\right|
\leq
dC_p\varepsilon^4R^4.
\]
We set
\[
\Delta^*
\coloneq
\frac{p(p-1)\varepsilon^2\Delta}{2},
\qquad
L^*
\coloneq
2d+p(p-1)\varepsilon^2L-\frac{\Delta^*}{2}.
\]
We choose $\varepsilon$ sufficiently small that
\begin{equation}\label{eq:transfer1}
dC_p\varepsilon^4R^4\leq\frac{\Delta^*}{2}.
\end{equation}

Suppose first that $M\geq L$. Then
\[
\max_{x\in Z(A)}\|\Phi_\varepsilon(x)\|_p^p
\geq
2d+p(p-1)\varepsilon^2L-\frac{\Delta^*}{2}
=
L^*.
\]
Now suppose that $M\leq L-\Delta$. Then
\begin{align*}
\max_{x\in Z(A)}\|\Phi_\varepsilon(x)\|_p^p
&\leq
2d+p(p-1)\varepsilon^2(L-\Delta)
+dC_p\varepsilon^4R^4\\
&\leq
L^*-\frac{3\Delta^*}{2}+\frac{\Delta^*}{2}
=
L^*-\Delta^*.
\end{align*}

It remains to bound the lower face. For every $x\in Z(A)$, we have
\[
\|\varepsilon Jx\|_p^p
=
2\varepsilon^p\|x\|_p^p
\leq
2d(\varepsilon R)^p
<d,
\]
where the last inequality follows from $\varepsilon R\leq1/2$ and
$p>1$. On the other hand,
\[
L^*-\Delta^*-d
=
d+p(p-1)\varepsilon^2
\left(L-\frac{3\Delta}{4}\right)
>0,
\]
since $L>\Delta$. Hence, the maximum over the lower face is strictly
smaller than $L^*-\Delta^*$.

We now choose a rational $\varepsilon$ satisfying
$\varepsilon R\leq1/2$ and \eqref{eq:transfer1}. Set
\[
\varepsilon
\coloneq
\min\left\{
\frac{1}{2R},
\frac{p(p-1)\Delta}
{2dC_pR^2(1+p(p-1)\Delta)}
\right\}.
\]
The first term ensures that $\varepsilon R\leq1/2$. To verify
\eqref{eq:transfer1}, let $z=p(p-1)\Delta$. Then
\[
\varepsilon^2
\leq
\frac{z^2}{4d^2C_p^2R^4(1+z)^2}
\leq
\frac{z}{4dC_pR^4}
=
\frac{p(p-1)\Delta}{4dC_pR^4},
\]
where we used $(1+z)^2\geq z$ and $dC_p\geq1$. Multiplying by
$dC_p\varepsilon^2R^4$ gives \eqref{eq:transfer1}.

The two face estimates now imply the claimed gap for $Z(B)$. Finally,
since $p$ is fixed, all entries of $B$ and the thresholds
$L^*,\Delta^*$ are rational, can be computed in polynomial time, and
have encoding size polynomial in that of $A,L$, and $\Delta$.
\end{proof}
Applying \Cref{prop:L2toLpTransfer} to the construction from
\Cref{thm:l2norm_hardness} (or, equivalently, \Cref{prop:second_reduction_result}) yields our first main result.

\LpHardness*

\begin{proof}
We reduce from \textsc{Multicolored Clique}. Given an instance $(G,k)$,
let $A_G\in\Q^{d\times q}$ be the generator matrix and let $L$ be the
threshold constructed in the proof of \Cref{thm:l2norm_hardness},
where $q=n+m+1$ and $d=2k+1$. Recall that, with $\Delta=m^2$, we have
\[
\max_{x\in Z(A_G)}\|x\|_2^2
\begin{cases}
\geq L & \text{if $(G,k)$ is a yes-instance,}\\
\leq L-\Delta & \text{if $(G,k)$ is a no-instance.}
\end{cases}
\]
Moreover, the construction satisfies $L>\Delta$. We can therefore
apply \Cref{prop:L2toLpTransfer} to $A_G,L$, and $\Delta$. This gives
a matrix $B_G\in\Q^{2d\times(q+1)}$ and rational thresholds
$L^*,\Delta^*>0$ such that
\[
\max_{y\in Z(B_G)}\|y\|_p^p
\begin{cases}
\geq L^* & \text{if $(G,k)$ is a yes-instance,}\\
\leq L^*-\Delta^* & \text{if $(G,k)$ is a no-instance.}
\end{cases}
\]

For fixed $p$, all constructed rational numbers have encoding size
polynomial in the encoding size of $(G,k)$. 
    The dimension is $D=4k+2\in \mathcal O(k)$.
    Thus, the reduction is polynomial and proves W[1]-hardness with respect to $d$.
	Finally, an algorithm with running time $\rho(D) N^{o(D)}$ would again contradict the ETH lower bound for \textsc{Multicolored Clique}.
\end{proof}

\subsection{From \texorpdfstring{$L_p^p$ to $L_p$}{L_p^p to L_p}}
Since the threshold in \textsc{$L_p$-Max on Zonotopes} must be rational, one cannot simply replace the threshold $L^*$ from the proof of \Cref{thm:LpHardness} with $(L^*)^{1/p}$ to transfer the hardness of \textsc{$L_p^p$-Max on Zonotopes} to hardness of \textsc{$L_p$-Max on Zonotopes}.
However, since the construction creates a positive gap in the norm between yes- and no-instances, we can still transfer the hardness by computing a rational number between the $p$-th roots of the no- and yes-thresholds $L^*-\Delta^*$ and $L^*$.
\begin{lemma}\label{lem:bisection}
Fix $p\in (1,\infty)\cap \Q$.
Given rational numbers $A>B>0$, it is possible to compute in polynomial time a rational number $T$ of polynomial encoding size with
\[
B^{1/p}<T\leq A^{1/p}.
\]
\end{lemma}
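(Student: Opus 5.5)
We reduce the comparison with irrational $p$-th roots to comparisons of rational numbers raised to integer powers. Write $p=a/b$ with $a,b\in\N_{\geq 1}$ fixed. For a positive rational $T$, the condition $B^{1/p}<T\leq A^{1/p}$ is equivalent to $B<T^p\leq A$, and hence to
\[
B^b<T^a\leq A^b .
\]
Since $a$ and $b$ are constants, both sides can be evaluated exactly in rational arithmetic. If $T$ has encoding size $s$, then $T^a$ has encoding size $\mathcal{O}(as)$. So for any candidate $T$ of polynomial size, the test "$B^b<T^a\leq A^b$" runs in polynomial time.

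The plan is a bisection on $T$. Start with $\ell_0\coloneq 0$ and $h_0\coloneq\max\{1,A\}$. Then $h_0^a\geq A^b$, because either $A\geq 1$ and $A^a\geq A^b$ (as $a>b$), or $A<1$ and $1\geq A^b$. Hence $h_0\geq A^{1/p}$.

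We keep the invariant $\ell^a\leq B^b$ and $h^a\geq A^b$, i.e.\ $\ell\leq B^{1/p}$ and $h\geq A^{1/p}$. In each step we take the midpoint $T=(\ell+h)/2$:
\begin{itemize}
\item if $B^b<T^a\leq A^b$, we output $T$;
\item if $T^a\leq B^b$, we set $\ell\coloneq T$;
\item otherwise $T^a>A^b$, and we set $h\coloneq T$.
\end{itemize}
After $j$ steps the interval has length $h_0/2^j$. The target interval $(B^{1/p},A^{1/p}]$ has positive length $\gamma\coloneq A^{1/p}-B^{1/p}$, and it lies inside $[\ell,h]$. So once $h_0/2^j<\gamma$, the midpoint $T$ must land in the target interval, and the procedure stops. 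Every midpoint is a dyadic rational with denominator $2^j$ and numerator at most $2^jh_0$, so its encoding size is $\mathcal{O}(j+\operatorname{size}(A))$.

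The main step is to show that $\log(1/\gamma)$ is polynomial in the encoding sizes of $A$ and $B$, so that polynomially many iterations suffice. We bound $\gamma$ below using the mean value theorem for $x\mapsto x^{1/p}$ on $[B,A]$:
\[
\gamma\geq \frac{1}{p}\,\xi^{1/p-1}(A-B)\qquad\text{for some }\xi\in[B,A].
\]
Since $1/p-1<0$, we have $\xi^{1/p-1}\geq \max\{1,A\}^{1/p-1}\geq \max\{1,A\}^{-1}$.

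Next, $A-B$ is a positive rational whose denominator is at most the product of the denominators of $A$ and $B$. Therefore $A-B\geq 2^{-\operatorname{size}(A)-\operatorname{size}(B)}$, and $\max\{1,A\}\leq 2^{\operatorname{size}(A)}$. Combining these gives
\[
\log_2(1/\gamma)=\mathcal{O}(\operatorname{size}(A)+\operatorname{size}(B)+\log p).
\]
So $j=\mathcal{O}(\operatorname{size}(A)+\operatorname{size}(B))$ iterations suffice. Each iteration is a polynomial-time exact comparison of rationals of polynomial size, which yields the lemma.
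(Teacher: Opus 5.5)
Your proposal is correct and is essentially the paper's argument: an exact bisection that compares $T^a$ with $A^b$ (and $B^b$) for $p=a/b$, with the number of iterations bounded by the mean value theorem estimate $A^{1/p}-B^{1/p}\geq (A-B)/(p\max\{1,A\})$. The differences are cosmetic: you stop as soon as a midpoint lands in $(B^{1/p},A^{1/p}]$, whereas the paper shrinks the interval below $\eta/2$ and returns its lower endpoint. One small slip is that your midpoints are dyadic multiples of $\max\{1,A\}$ rather than dyadic rationals, but this does not affect the $\mathcal{O}(j+\operatorname{size}(A))$ size bound.
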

\begin{proof}
    Let $\alpha=A^{1/p}$ and $\beta=B^{1/p}$, and set $U=\max\{1,A\}$.
    The mean value theorem applied to $x\mapsto x^{1/p}$ and the interval $[B,A]$ gives a $\xi\in (B,A)$ such that
    \[
    \alpha-\beta
    =\frac{A-B}{p\xi^{1-1/p}}
    \geq \frac{A-B}{pA^{1-1/p}}
    \geq \frac{A-B}{pU}
    \eqcolon \eta
    >0.
    \]
    The number $\eta$ is rational and has encoding size polynomial in the encoding length of $A$ and $B$.

    Write $p=a/b$ as an irreducible fraction with $a,b\in \N_{\geq 1}$.
    We perform binary search for $\alpha$ on the interval $[0,U]$.
    For every rational interval midpoint $q\geq 0$, the comparison $q\leq \alpha=A^{b/a}$ can be decided exactly by comparing $q^a$ and $A^b$.
    We stop when the interval has length less than $\eta/2$, and let $T$ be the lower endpoint.
    Then $T\leq \alpha$ and
    \[
    T>\alpha-\eta/2>\alpha-\eta >\beta.
    \]
    The binary search takes $k\in \mathcal{O}(\log(U/\eta))$ steps, so $k$ is linear in the encoding size of the input numbers $A$ and $B$.
    Hence, the encoding size of $T$ is linear in the encoding size of the input.
\end{proof}
\LpNormHardness*
\begin{proof}
    We apply \Cref{lem:bisection} to the thresholds $A=L^*$ and $B=L^*-\Delta^*$ from the proof of \Cref{thm:LpHardness} and compute in polynomial time a rational number of polynomial encoding size $T$ with $B^{1/p}<T\leq A^{1/p}$.
    In a yes-instance, we have $\max_{z\in Z(B_G)} \norm{z}_p\geq (L^*)^{1/p}\geq T$ and in a no-instance, we have $\max_{z\in Z(B_G)} \norm{z}_p\leq (L^*-\Delta^*)^{1/p}<T$.
    Thus, the reduction from the proof of \Cref{thm:LpHardness} applies analogously to \textsc{$L_p$-Max on Zonotopes}, and we obtain the same hardness results.
\end{proof}
With the equivalence between $L_p$ and described in \Cref{sec:icnn_lipschitz}, we obtain the following corollary.

\IcnnLipschitz*

\section{Conclusion}
We prove W[1]-hardness with respect to the dimension $d$ of exact $L_p$-norm maximization over generator-represented zonotopes in $\R^d$ for every fixed $p\in (1,\infty)\cap \Q$.
Assuming the ETH, the problem admits no algorithm with running time $\rho(d)N^{o(d)}$ for any computable function $\rho$.
The same hardness result also holds for maximizing the $p$-th power of the $L_p$-norms for $p\in (1,\infty)\cap \Q$.
As a consequence of the duality between zonotopes and two-layer ReLU ICNNs, these results imply the same parameterized hardness and ETH lower bounds for computing the $L_p$-Lipschitz constants for $p\in (1,\infty)\cap \Q$ of two-layer ReLU ICNNs.
Therefore, restricting a ReLU network to be input-convex does not make the problem tractable, in contrast to the $L_1$- and $L_\infty$-norms.
Moreover, the running time lower bounds imply that simple brute-force enumeration algorithms based on enumerating zonotope vertices or linear regions of the neural network are essentially optimal under the ETH.

Since restricting to input-convex ReLU networks does not make \textsc{Two-Layer ICNN $L_p$-Lipschitz Constant} tractable, it would be interesting to identify other special cases, additional restrictions on the network architecture or weights, or combined parameters that lead to tractability.
Also, it would be interesting to understand whether the gadgets and techniques (in particular, \Cref{lem:interpolation}) presented here are useful for other geometric problems (especially for hardness reductions involving zonotopes).
Finally, it remains open whether the problems are also contained in W[1], which would settle the parameterized complexity status completely.

We conclude with a description of our research process involving LLMs and some opinions on mathematical research in the age of AI in the next section.

\section{Research Process with AI}
\label{sec:AIprocess}

As we pointed out in the introduction, the resolution of the open problem presented in this paper has also been achieved by others around the same time.
The homepage with the agentic AI pipeline (\Cref{fn:ai-pipeline}) with the result from July 2026 is very explicit about AI use, and we also state clearly that our two reductions were independently found using LLMs: initial versions of the reductions in both, \Cref{sec:reduction-explicit} and \Cref{sec:implicit}, were found independently in May 2026 by different subsets of the authors. Both groups spent significant effort on independent versions throughout June 2026, before all authors joined forces only in July 2026.
While the advent of AI-generated results is pushing the race for fast publications to an extreme, we decided to invest the additional time of a `human layer' aiming at a high-quality and easy-to-verify exposition before making the result public.

It may not be surprising that this result was found independently by different groups with the use of LLMs. 
Given a list of open problems, like the one from COLT 2025 \citep{pmlr-v291-froese25b}, no expertise on the respective subject is actually required to prompt an LLM for a solution of the problem.
Furthermore, this is one of the problems where AI seems promising as finding reductions of this kind often involves only a few tricks that might have appeared scattered in the literature before and some tedious adaptations of actual symbolic expressions.
This may be inherently easier for machines than for humans.
Note that our reductions were found with rather basic prompts and no elaborate agentic pipeline.

Already in the process of prompting, one aspect of the human ingredient was to push for strengthenings of the reduction, like lowering the dimension of the geometric gadgets from quadratic to linear.
Still, from a human point of view, the LLM proofs had gaps, flaws and were overly technical.
Furthermore, the writing was full of pseudo-intuitive expressions, distracting from the actual argument. 
A mathematical proof should not only be correct but additionally convincing and conceptually clear.
Despite significant attempts, further prompting of LLMs did neither lead to a gain of intuition or clarity, nor to identification of some crucial previous work. 
Using our human expertise, we identified that various ideas in the LLM-generated proofs had appeared before in the literature.
Working through the LLM output led to an understanding of the geometric intuition and the extraction of conceptual insights as presented above. 
In particular, we identified \Cref{lem:cone_structure} as an interesting geometric observation that was hidden in the elaborations leading to the desired result.

Summarizing, we made an effort to create digested mathematics instead of just producing proofs, in particular by providing intuition and extracting conceptual insights.
There is a difference between merely correct proofs and actually understandable proofs.
While the result is interesting in its own right, ingredients of proofs have always been a crucial contribution and this should not be hidden in the stream of machine-generated arguments.
That was also a reason to include two reductions into this paper as they provide different conceptual insights. 
With this, we hope to contribute to the development of mathematics in the age of AI, focusing on high-quality exposition, and drawing inspiration from~\citep{tao2026mathematicsageai}.

\bibliography{references}
\bibliographystyle{plainnat}

\end{document}